\DeclareMathAlphabet{\mathcal}{OMS}{cmsy}{m}{n} % for proper mathcal
\definecolor{Gray}{gray}{0.85}
\newcommand{\exfo}%  macro for example font
    [1]{\ensuremath{\mathsf{#1}}}
\newcolumntype{H}{@{}>{\lrbox0}l<{\endlrbox}}
\newcommand{\Onto}{\Omc}
\newcommand{\A}{\Amc}
\newcommand{\T}{\Tmc}
\newcommand{\I}{\Imc}
\newcommand{\J}{\Jmc}
\newcommand{\ALCHN}{\ensuremath{\mathcal{ALCHN}}\xspace}
\newcommand{\SHIN}{\ensuremath{\mathcal{SHIN}}\xspace}
\newcommand{\SHQ}{\ensuremath{\mathcal{SHQ}}\xspace}
\newcommand{\SHIQ}{\ensuremath{\mathcal{SHIQ}}\xspace}
\newcommand{\SRIQ}{\ensuremath{\mathcal{SRIQ}}\xspace}
\newcommand{\SROQ}{\ensuremath{\mathcal{SROQ}}\xspace}
\newcommand{\SROIN}{\ensuremath{\mathcal{SROIN}}\xspace}
\newcommand{\SROIQ}{\ensuremath{\mathcal{SROIQ}}\xspace}
\newcommand{\SROIQBs}{\ensuremath{\mathcal{SROIQB}_\text{s}}\xspace}
\newcommand{\axiom}[1]{\ensuremath{\langle #1\rangle}\xspace}
\newcommand{\atLeast}[3]{\ensuremath{{\ge}#1\,#2.#3}\xspace}
\newcommand{\atMost}[3]{\ensuremath{{\le}#1\,#2.#3}\xspace}
\newcommand{\uatLeast}[2]{\ensuremath{{\ge}#1\,#2}\xspace}
\newcommand{\uatMost}[2]{\ensuremath{{\le}#1\,#2}\xspace}
\newcommand{\Self}{\ensuremath{\text{Self}}\xspace}
\newcommand{\crisp}{\ensuremath{_\mathsf{c}}\xspace}
\newcommand{\godel}{G\"odel\xspace}
\newcommand{\lukas}{\L u\-ka\-sie\-wicz\xspace}
\newcommand{\tnorm}{\ensuremath{\otimes}\xspace}
\newcommand{\tconorm}{\ensuremath{\oplus}\xspace}
\newcommand{\negation}{\ensuremath{\ominus}\xspace}
\newcommand{\implication}{\ensuremath{\Rightarrow}\xspace}
\newcommand{\chain}{\ensuremath{\mathscr{C}}\xspace}
\newcommand{\chainp}{\ensuremath{\mathscr{C}_{>0}}\xspace}
\newcommand{\Var}[1]{\ensuremath{\textsf{VarInds}(#1)}\xspace}
\newcommand{\qt}{{\ensuremath{q_\mathsf{t}}}\xspace}
\newcommand{\qf}{{\ensuremath{q_\mathsf{f}}}\xspace}
\newcommand{\qs}{{\ensuremath{q_\mathsf{s}}}\xspace}
\newcommand{\cq}{\ensuremath{\textnormal{CQ}}\xspace}
\global\long\def\next#1{#1_{\textnormal{next}}}
\newcommand{\lb}{\linebreak}
\newcommand{\NV}{\ensuremath{\mathsf{N_V}}\xspace}
\newcommand{\ans}{\ensuremath{\mathfrak{a}}\xspace}
\newcommand{\TwoExpTime}{\textsc{2\nobreakdash-ExpTime}\xspace}
\newcommand{\ThreeExpTime}{\textsc{3\nobreakdash-ExpTime}\xspace}
\newcommand{\pagoda}{\mbox{PAGOdA}\xspace}
\newcommand{\delorean}{\mbox{DeLorean}\xspace}
\newcommand{\hermit}{\mbox{HermiT}\xspace}
\begin{document}

\title{Answering Fuzzy Conjunctive Queries over Finitely Valued Fuzzy
Ontologies}

\author{Stefan Borgwardt \and Theofilos Mailis \and Rafael Pe\~naloza \and Anni-Yasmin Turhan}

\institute{S. Borgwardt \and A.-Y. Turhan \at
		Chair for Automata Theory, Institute for Theoretical Computer Science, Technische Universit\"at Dresden,
		Germany
		\email{\{stefan.borgwardt,anni-yasmin.turhan\}@tu-dresden.de} 
		\and 
		T. Mailis \at Department of Informatics and Telecommunications, 
		National and Kapodistrian University of Athens, Greece
		\email{theofilos@image.ntua.gr}
		\and
		R. Pe\~naloza \at KRDB Research Centre, Free University of Bozen-Bolzano, Italy
		\email{rafael.penaloza@unibz.it}
		}
\maketitle

% !TEX root = ./JoDSPaper.tex

\begin{abstract}
  Fuzzy Description Logics (DLs) provide a means for representing vague knowledge
  about an application domain.
  In this paper, we study fuzzy extensions of conjunctive queries (CQs) over
  the DL \SROIQ based on finite chains of degrees of truth. To answer such
  queries, 
  we extend a well-known technique that reduces the fuzzy ontology
  to a classical one, and use classical DL reasoners as a black box.
  We improve the complexity of previous reduction techniques for
  finitely valued fuzzy DLs, which allows us to prove tight complexity results
  for answering certain kinds of fuzzy CQs.
  We conclude with an experimental evaluation of a prototype implementation,
  showing the feasibility of our approach.
\end{abstract}

\section{Introduction}

%\stbo{check again that introduction, conclusions, and related work are
%up-to-date}
%\rpn{updated some publications, introduction seems correct to me}

Description Logics (DLs) are a family of knowledge representation
languages with unambiguous syntax and well-defined semantics that are widely used to
represent the conceptual knowledge of an application domain in a
structured and formally well-understood manner.
DLs have been successfully employed to formulate ontologies for
a range of knowledge domains, in particular for the bio-medical sciences.
Prominent examples of ontologies in these areas are the Gene Ontology,%
\footnote{\url{http://geneontology.org/}}
and the ontology {\sc Snomed\,CT}.%
\footnote{\url{http://ihtsdo.org/snomed-ct/}}
Arguably the largest success of DLs to date is that they provide the
formal foundation for the standard web ontology language
OWL, a milestone for the Semantic Web. 
More precisely, the  current version of the web ontology language, OWL~2,
is based on the very expressive DL \SROIQ.%
\footnote{\url{http://www.w3.org/TR/owl2-overview/}}

In DLs, knowledge is represented through \emph{concepts} that describe collections of objects 
(that is, correspond to unary predicates from first-order logic), and \emph{roles}
that define relations between pairs of objects (binary predicates).
To encode the actual knowledge of the domain, DLs employ different kinds of
\emph{axioms}. These axioms restrict the possible interpretations of the concepts and 
roles.
For example, we can express the fact that \exfo{cpuA} is an
overused CPU, and that every server that has a part that is overused is
a server with limited resources through the axioms
\begin{align}
  &(\exfo{CPU}\sqcap \exfo{Overused})(\exfo{cpuA})
    \label{exa:axiom:first} \\
  &\exfo{Server} \sqcap \exists \exfo{hasPart}. \exfo{Overused} \nonumber\\
  &\qquad\qquad\qquad\sqsubseteq \exfo{ServerWithLimitedResources}
  \label{exa:axiom:second}
\end{align}
An axiom of the form~\eqref{exa:axiom:first} is called an \emph{assertion},
while~\eqref{exa:axiom:second} is a \emph{general concept inclusion} (GCI).

It has been widely argued that many application domains require the representation
of vague concepts, for which it is impossible to precisely characterize the objects
that belong to these concepts, and distinguish them from those who do not belong
to them~\cite{straccia2013foundations}. 
A simple example of such a concept is that of an overused CPU. While it
is easy to state that a CPU that is running permanently at its maximum capacity is
overused, and one that is not being used at all is not overused, there is no
precise usage point where a CPU starts (or stops) being overused. Fuzzy Description
Logics have been proposed to alleviate this problem. In these logics,
objects are assigned a \emph{membership degree}, typically a number between
0 and 1, expressing ``how much'' they belong to a given concept. 
In general, the higher the degree of an object, the more it belongs to the concept.
To represent vague knowledge, axioms are also extended to restrict the possible
degrees that the interpretations may use. 
Thus, one can express that \exfo{cpuA} is overused with degree at least 0.8 through
the assertion \mbox{$\exfo{Overused}(\exfo{cpuA})\geqslant 0.8$}.

Formally, fuzzy DLs generalize classical DLs by interpreting concepts and roles
as fuzzy unary predicates and fuzzy binary predicates, respectively. Hence,
fuzzy DLs can be seen as sublogics of fuzzy first-order logic. Adopting this view,
one can use a triangular norm (\emph{t-norm}) and its associated operators to interpret the different logical 
constructors. Each t-norm then defines a specific family of fuzzy DLs.

It has been shown that reasoning in fuzzy DLs easily becomes undecidable, if
infinitely many membership degrees are allowed~\cite{BaBP-JPL14,BoDP-AI15}.
In fact, these undecidability results hold even for relatively inexpressive fuzzy DLs.
This has motivated the study of \emph{finitely valued} fuzzy DLs.

It is known that the complexity of standard reasoning tasks in expressive DLs
is not affected by the use of t-norm based finitely valued 
semantics~\cite{BoPe-JoDS13,borgwardt2013consistency}. Unfortunately,
the automata-based techniques exploited in~\cite{BoPe-JoDS13} cannot be easily
adapted to obtain complexity bounds for the problem of answering conjunctive
queries in these logics. Moreover, despite providing optimal complexity bounds,
automata-based methods are not used in practice due to their bad best-case
behavior.

A different approach for reasoning in the presence of finitely many membership degrees 
is \emph{crispifying}; i.e., transforming
a finitely valued ontology into an ``equivalent'' classical ontology, from
which the relevant membership degrees can be
read~\cite{BoDG-IJUF09,bobillo2013finite,straccia2004transforming}.
Reasoning in finitely valued fuzzy DLs is thus reduced to reasoning in classical DLs,
for which very efficient methods have already been developed and implemented. The main
drawback of the translation described in~\cite{BoDG-IJUF09,bobillo2013finite}
is that it may introduce an exponential blow-up of the ontology, thus affecting the efficiency
of the overall method.

In this paper, we adapt the crispification approach for answering conjunctive queries
in expressive finitely valued fuzzy DLs. The problem of answering conjunctive queries 
has recently received
much attention as a powerful means to access facts encoded in an ontology. 
For example, using a fuzzy conjunctive query it is possible to ask for all pairs of
servers and CPUs such that the CPU is an overused part (to degree at least $0.6$)
 of the server as follows:
\begin{multline*}
  \{\exfo{Server}(x)\geqslant 1 ,~
  \exfo{hasPart}(x,y)\geqslant 1 ,~
  \exfo{CPU}(y)\geqslant 1 ,{} \\
  \exfo{Overused}(y)\geqslant 0.6\}.
\end{multline*}
The crispification approach allows us to effectively answer conjunctive queries 
over finitely valued ontologies, by re\-using the methods developed for the 
classical case. Once the ontology is crispified, this approach calls a classical conjunctive query answering engine
as a black-box procedure. Thus any optimization developed for the classical
case automatically improves the performance for the finitely valued
scenario. What remains to be addressed is the exponential blow-up of the ontology, if done according to \cite{BoDG-IJUF09,bobillo2013finite}.
We strengthen our results by providing a linear preprocessing
step that avoids the exponential blow-up produced by this
crispification.
Using this preprocessing of the finitely valued ontology, we can
guarantee that the classical ontology produced is only polynomially
larger than the original input. In particular, this means that the
classical query answering engine becomes able to provide answers more
efficiently over classical ontologies of lesser size.

\medskip
\noindent
The contributions of this paper are the following:
\begin{itemize}
\item We prove that some of the previous crispification
  algorithms~\cite{BDGS-IJUF12,bobillo2011reasoning,MaPe-RR-14} are incorrect
  for qualified number restrictions (indicated by the letter~\Qmc in the name
  of DLs) by means of a counter-example (see
  Example~\ref{exa:number-restrictions-incorrect}).
\item We discuss a possible way to reduce such qualified number restrictions,
  but which depends on the presence of so-called Boolean role
  constructors~\cite{RuKH-JELIA08} in the DL (see
  Section~\ref{subsec:number-restrictions}).
\item We improve the reduction from finitely valued \SROIN ontologies to classical \SROIN
  ontologies by introducing a linear normalization step (\Nmc stands for
  \emph{unqualified} number restrictions, see Section~\ref{sec:reduction}).
\item We extend the crispification approach to answering different types of 
fuzzy
  conjunctive queries in the finitely valued setting and we prove 
  correctness of the obtained methods (see Section~\ref{sec: reduction based conjunctive query
    answering}). This approach works for any known crispification
  algorithm, in particular specialized ones that correctly reduce
  number restrictions.
\item We assess the complexity of the presented conjunctive query answering
  technique for a family of fuzzy extensions of (sublogics of) \SROIQ in regard of two types of conjunctive queries that use membership degrees.
\item We provide an evaluation of a prototype implementation of our methods over the LUBM ontology benchmark~\cite{guo2005lubm}
based on the reduction-based 
\delorean reasoner\cite{BoDG-ESA12} for fuzzy ontologies and a standard query 
answering reasoner for crisp ontologies (\pagoda~\cite{ZCGNH-DL-15}).
\end{itemize}

A preliminary version of this paper can be found in~\cite{MaPe-RR-14}, where
the incorrect reduction of number restrictions was still used.
We also extend here that earlier paper by full proofs for the correctness of the crispification procedure, something which has not been done before in the literature.
Finally, we optimize the crispification procedure 
described~\cite{MaPe-RR-14} in order to eliminate an
exponential blow-up inherent in some of the previous crispification proposals.

%\smallskip
%\noindent
The rest of the paper is structured as follows:
Section~\ref{sec:preliminaries} introduces the syntax and semantics of
finitely valued fuzzy DLs based on \SROIQ.
Section~\ref{sec:reduction} describes our improved reduction procedure from
fuzzy to classical ontologies.
Section~\ref{sec: reduction based conjunctive query answering} presents the
actual reduction from fuzzy to classical conjunctive query answering.
Section~\ref{sec:results} provides an evaluation of a prototype implementation
over (fuzzified versions of) the LUBM ontology benchmark~\cite{guo2005lubm}.
Finally, Section~\ref{sec: related work} presents the current literature on
reduction techniques and conjunctive query answering for fuzzy DLs and
Section~\ref{sec: conlusions} summarizes the paper and mentions directions for
future work.

% !TEX root = ./JoDSPaper.tex

\section{Preliminaries}\label{sec:preliminaries}

We first introduce a class of finite chains, together with some basic operations over them.
Afterwards, we formally define the fuzzy extension of \SROIQ, whose semantics is based 
on these chains.

\subsection{Finite Fuzzy Logics}

The semantics of fuzzy DLs is based on truth structures endowed with
additional operators for interpreting the logical constructors. We consider
arbitrary \emph{finite} total orders (or \emph{chains}). Since the names
of the truth degrees in a chain are not relevant, we 
consider in the following only the canonical chain of $n$ elements
$\chain:=\{0,\frac{1}{n-1},\dots,\frac{n-2}{n-1},1\}$, in the
usual order.
We denote $\chain\setminus\{0\}$ by~\chainp.
We use the notation
$\next{d}$ to refer to the \emph{direct upper neighbour} of~$d$ in~\chain, which
is the unique smallest element strictly larger than~$d$.
We now consider tuples of the form
$(\chain,\tnorm,\implication,\negation,\tconorm)$ that specify the
chain together with the operators used for interpreting conjunction,
implication, negation, and disjunction, respectively.

The largest family of operators used for fuzzy semantics is based on \emph{t-norms}, which are
associative, commutative binary operators that are monotonic in both arguments
and have identity $1$. These binary operators, denoted by \tnorm, are used in
mathematical fuzzy logic to interpret conjunction.
%
%% AYT: can be omitted -- SB approved!  :-)
%We restrict our considerations to \emph{smooth} t-norms; that is, we require that
%whenever $x$ and $y$ are direct neighbours in~\chain, then for every $z\in\chain$,
%if $x\tnorm z\not= y\tnorm z$, then $x\tnorm z$ and $y\tnorm z$ are direct neighbours
%in~\chain, too.
%
The \emph{residuum}~$\implication$ is a binary operator which is used to interpret
implication. It is uniquely defined by the property that
\[ (x\tnorm y)\leqslant z \text{ iff } y\leqslant(x\implication z)
  \text{ for all } x,y,z\in\chain. \]
The \emph{residual negation}~$\negation$ is defined simply as
$\negation x:=x\implication 0$.
Finally, the \emph{t-conorm}, used for the disjunction, is %the dual of the t-norm,
defined as 
$x\tconorm y:=1-((1-x)\tnorm(1-y))$ for all $x,y\in\chain$.
Two prominent families of operators are based on the \emph{finite \godel 
\mbox{t-norm}}
and the \emph{finite \L ukasiewicz t-norm} (see Table~\ref{table: fuzzy logic 
operators}).
Note, however, that we do not restrict our considerations to only those logics
listed in Table~\ref{table: fuzzy logic operators}; our results are valid for
any semantics based on a finite t-norm.
\begin{table*}[tb]
  \centering
  \caption{Families of fuzzy logic operators.}
  \label{table: fuzzy logic operators}
  \begin{tabular}{>{\centering}p{0.16\textwidth}
    >{\raggedright}p{0.17\textwidth}
    >{\raggedright}p{0.17\textwidth}
    >{\raggedright}p{0.17\textwidth}
    >{\raggedright}p{0.17\textwidth}}
    \toprule
    name &
      conjunction $x\tnorm y$ &
      disjunction $x\tconorm y$ &
      negation $\negation x$ &
      implication $x\implication y$ \tabularnewline
    \midrule
    Zadeh &
      $\min(x,y)$ &
      $\max(x,y)$ &
      $1-x$ &
      $\max(1-x,y)$ \tabularnewline[1mm]
    \godel &
      $\min(x,y)$ &
      $\max(x,y)$ &
      $\begin{cases}1&\text{if }x=0\\0&\text{if }x>0\end{cases}$ &
      $\begin{cases}1&\text{if }x\leqslant y\\y&\text{if }x>y\end{cases}$
      \tabularnewline[4mm]
    \lukas &
      $\max(x+y-1,0)$ &
      $\min(x+y,1)$ &
      $1-x$ &
      $\min(1-x+y,1)$ \tabularnewline
    \bottomrule
	\end{tabular}
\end{table*}

An alternative to t-norm based approaches for interpreting the logical
connectives in fuzzy logics is the so-called Zadeh family of
operators, shown in the first row of Table~\ref{table: fuzzy logic
  operators}. Intuitively, the \emph{Zadeh family} %of operators
can be seen as a combination of the G\"odel and the \lukas operators.

\subsection{The Fuzzy DL \chain-\SROIQ}

We introduce finitely valued, fuzzy extensions of the classical
description logic \SROIQ~\cite{horrocks2006even}---one of the most
expressive decidable DLs which provides the direct model-theoretic
semantics of the standardized ontology language for the Semantic Web OWL~2.
It has been shown that reasoning in finitely valued fuzzy extensions of \SROIQ can be 
reduced to reasoning in classical 
\SROIQ~\cite{bobillo2008optimizing,bobillo2011reasoning,straccia2004transforming}.
This reduction technique will be considered in detail in Section~\ref{sec:reduction}.

Consider three countable and pairwise disjoint sets of
\emph{individual names}~\NI, \emph{concept names}~\NC, and \emph{role
names}~\NR.
Individual names refer to single elements of an application domain, concept
names describe sets of elements, and role names
binary relations between elements.
Based on these, complex \emph{concepts} and \emph{roles} can be built using different
constructors.
More precisely, a \emph{(complex) role} is either of the form $r$ or $r^-$
(\emph{inverse role}), for $r\in\NR$, or it is the \emph{universal role}~$u$.
Similarly, \emph{(complex) concepts} are built inductively from concept names using the
following constructors
\begin{itemize}
  \item $\top$ ~ (\emph{top concept}),
  \item $\bot$ ~ (\emph{bottom concept}),
  \item $C\sqcap D$ ~ (\emph{conjunction}),
  \item $C\sqcup D$ ~ (\emph{disjunction}),
  \item $\lnot C$ ~ (\emph{negation}),
  \item $\forall r.C$ ~ (\emph{value restriction}),
  \item $\exists r.C$ ~ (\emph{existential restriction}),
  \item $\left\{d_1/a_1,\ldots,d_m/a_m\right\}$ ~ (\emph{fuzzy nominal}),
  \item \atLeast{m}{r}{C} ~ (\emph{at-least restriction}),
  \item \atMost{m}{r}{C} ~ (\emph{at-most restriction}), and
  \item $\exists r.\Self$ ~ (\emph{local reflexivity}),
\end{itemize}
where $C,D$ are concepts, $r$ is a role, $m$ is a natural number,
$a_1,\dots,a_m\in\NI$, and $d_1,\dots,d_m\in\chainp$.

An \emph{ontology} $\Onto$ consists of the intensional and the extensional knowledge
related to an application domain.
The intensional knowledge, i.e.\ the general knowledge about the application
domain, is expressed through 
\begin{itemize}\item 
  a \emph{TBox}~\Tmc, a set of finitely many \emph{(fuzzy) general concept
    inclusion (GCIs) axioms} of the form $\axiom{C\sqsubseteq D\geqslant
    d}$, where $d\in\chainp$, and

\item an \emph{RBox} \Rmc, a finite set of \emph{role axioms}, which
  are statements of the following form:
  \begin{itemize}
  \item $\axiom{r_1\dots r_m\sqsubseteq r\geqslant d}$ ~ (\emph{(fuzzy) complex
      role inclusion}),
  \item $\text{trans}(r)$ ~ (\emph{transitivity}),
  \item $\text{dis}(r_1,r_2)$ ~ (\emph{disjointness}),
  \item $\text{ref}(r)$ ~ (\emph{reflexivity}),
  \item $\text{irr}(r)$ ~ (\emph{irreflexivity}),
  \item $\text{sym}(r)$ ~ (\emph{symmetry}), or
  \item $\text{asy}(r)$ ~ (\emph{asymmetry}),
  \end{itemize}
  where $r,r_1,\dots,r_m$ are roles and again $d\in\chainp$.
\end{itemize}
The extensional knowledge, which refers to the particular knowledge about specific
facts or situations, is expressed by an \emph{ABox}~\Amc containing a finite set of statements
about individuals of the form:
\begin{itemize}
  \item \axiom{C(a)\bowtie d} ~ (\emph{concept assertion}),
  \item \axiom{r(a,b)\bowtie d} ~ (\emph{role assertion}),
%  \item \axiom{\lnot r(a,b)\bowtie d} ~ (\emph{negated role assertion}),
  \item $a\neq b$ ~ (\emph{individual inequality assertion}), or
  \item $a=b$ ~ (\emph{individual equality assertion}),
\end{itemize}
where $a,b\in\NI$, $C$ is a concept, $r$ is a role,
$d\in\chain$, and ${\bowtie}\in\{\leqslant,\geqslant\}$.
If $\bowtie$ is $\geqslant$, then we again consider only values $d>0$; dually,
for assertions using $\leqslant$ we assume that $d<1$.

For any axiom of the form $\axiom{\alpha\geqslant 1}$, we may simply
write~$\alpha$.
Finally, an \emph{ontology} is a tuple $\Omc=(\Amc,\Tmc,\Rmc)$ consisting of an
ABox~\Amc, a TBox~\Tmc, and an RBox~\Rmc.

To ensure decidability of classical \SROIQ, a set of restrictions
regarding the use of roles is imposed. For example, transitive roles are not allowed
to occur in number restrictions (for more details see~\cite{horrocks2006even}).
The same restrictions are also adopted for fuzzy extensions of
\SROIQ~\cite{bobillo2008optimizing,bobillo2009fuzzy,bobillo2011reasoning}.
However, they are not essential for our purposes, as all results presented in
this paper hold regardless of these restrictions (except, of course, for the complexity
results of Section~\ref{sec:complexity}).

The semantics of \chain-\SROIQ is defined via interpretations. 
A \emph{(fuzzy) interpretation} is a pair $\I=(\Delta^\I,\cdot^\I)$, consisting
of a non-empty 
set $\Delta^\I$ (called the \emph{domain}) and an \emph{interpretation function} $\cdot^\I$ 
that maps every individual name~$a\in\NI$ to an element $a^\I\in\Delta^\I$, every concept 
name~$A\in\NC$ to a fuzzy set $A^\I\colon\Delta^\I\to\chain$, and every role name~$r\in\NR$ 
to a fuzzy binary relation $r^\I\colon\Delta^\I\times\Delta^\I\to\chain$.
This function is extended to complex roles and complex concepts as described in
Table~\ref{table: SROIQ concepts}.
\begin{table*}[tb]
\caption{Syntax and semantics of concepts in \chain-\SROIQ}
\label{table: SROIQ concepts}
\centering
{%
\renewcommand{\arraystretch}{1.3}
\begin{tabular}{>{\raggedright}m{0.19\textwidth}>{\raggedright}m{0.50\textwidth}}
\hline\rowcolor{Gray}
	Concept $C$ & Semantics $C^\I(x)$ for $x\in\Delta^\I$\tabularnewline
\hline
	$\top$ & $1$\tabularnewline
	$\bot$ & $0$ \tabularnewline
	$C\sqcap D$ & $C^\I(x)\tnorm D^\I(x)$ \tabularnewline
	$C\sqcup D$ & $C^\I(x)\tconorm D^\I(x)$\tabularnewline
	$\neg C$ & $\negation C^\I(x)$ \tabularnewline
	$\forall r.C$
	& $\displaystyle \inf_{y\in\Delta^\I} r^\I(x,y)\implication C^\I(y)$
	\tabularnewline
	$\exists r.C$
	& $\displaystyle \sup_{y\in\Delta^\I} r^\I(x,y)\tnorm C^\I(y)$
	\tabularnewline[3mm]
	$\left\{ d_1/o_{1},\ldots,d_m/o_{m}\right\} $
	& $\begin{cases}d_i&\text{if }x=o_i^\I,\ i\in\{1,\dots,m\}\\
	  0&\text{otherwise}\end{cases}$ \tabularnewline
	\atLeast{m}{r}{C}
	& $\displaystyle \sup_{\substack{y_1,\ldots y_m\in\Delta^\I\\
	  \text{pairwise different}}}
  	  \min_{i=1}^m r^\I(x,y_i)\tnorm C^\I(y_i)$
	\tabularnewline[4mm]
	\atMost{m}{r}{C}
	& $\displaystyle \inf_{\substack{y_1,\ldots y_{m+1}\in\Delta^\I\\
		\text{pairwise different}}}
	    \ominus\left(\min_{i=1}^{m+1} r^\I(x,y_i)\tnorm C^\I(y_i)\right)$
	\tabularnewline
	$\exists r.\Self$ & $r^\I(x,x)$ \tabularnewline
\hline\rowcolor{Gray}
	Role $s$ & Semantics $s^\I(x,y)$ for $x,y\in\Delta^\I$ \tabularnewline
\hline 
	$r^{-}$ & $r^\I(y,x)$ \tabularnewline
	$u$ & $1$ \tabularnewline
\hline
\end{tabular}
}
\end{table*}
Note that the usual fuzzy semantics of existential and value restrictions,
number restrictions, and role inclusions formally requires the computation of
an infimum or supremum over all domain elements. However, since~\chain is a
finite chain, in our case these are actually minima or maxima, respectively.

For the two-valued chain $\chain=\{0,1\}$, we obtain the semantics of classical
\SROIQ, since then all fuzzy operators correspond their classical counterparts.
In this setting, it is more natural to treat~$C^\I$ as a subset of
$\Delta^\I$, given by its characteristic function
$C^\I\colon\Delta^\I\to\{0,1\}$ (and analogously for roles). We call
such an interpretation a \emph{classical interpretation}.

An ontology is \emph{satisfied} by a fuzzy interpretation~\I if all of
its axioms are satisfied, as defined in Table~\ref{table: SROIQ
  axioms}.
\begin{table*}[tb]
\caption{Syntax and semantics of axioms in \chain-\SROIQ}
\label{table: SROIQ axioms}
\centering
{
\renewcommand{\arraystretch}{1.3}
\begin{tabular}{>{\raggedright}m{0.19\textwidth}>{\raggedright}m{0.50\textwidth}}
\hline \rowcolor{Gray}
	ABox & Semantics \tabularnewline
\hline 
	$C(a) \bowtie d$ & $C^\I(a^\I)\bowtie d$
	\tabularnewline
	$r(a,b) \bowtie d $ & $r^\I(a^\I,b^\I)\bowtie d$
	\tabularnewline
	%
%	$\neg r(a,b) \bowtie d$ & $\negation r^\I(a^\I,b^\I) \bowtie d$
%	\tabularnewline
	%
	$a\neq b$ & $a^\I\neq b^\I$ \tabularnewline
	$a=b $ & $a^\I = b^\I$ \tabularnewline
\hline \rowcolor{Gray}
	TBox & Semantics (for all $x\in\Delta^\I$) \tabularnewline
\hline 
	$\langle C\sqsubseteq D \geqslant d \rangle$
	& $C^\I(x)\implication D^\I(x) \geqslant d$ \tabularnewline
\hline \rowcolor{Gray}
	RBox & Semantics (for all $x,y,z\in\Delta^\I$) \tabularnewline
\hline
	$\langle r_1 \ldots  r_n\sqsubseteq r \geqslant d\rangle$
	& $\displaystyle \left(\sup_{x_1,\dots,x_{n-1}\in\Delta^\I}
	  r_1^\I(x,x_1)\tnorm\ldots \tnorm r_n^\I(x_{n-1},y)\right)
	  \implication r^\I(x,y) \geqslant d$ \tabularnewline
	$\text{trans}(r)$  & $r^\I(x,y)\tnorm r^\I(y,z)\leqslant r^\I(x,z)$
	  \tabularnewline
	$\text{dis}(r_1,r_2)$ & $r_1^\I(x,y)=0$ or $r_2^\I(x,y)=0$ \tabularnewline
	$\text{ref}(r)$ & $r^\I(x,x)=1$ \tabularnewline
	$\text{irr}(r)$ & $r^\I(x,x)=0$ \tabularnewline
	$\text{sym}(r)$ & $r^\I(x,y)=r^\I(y,x)$ \tabularnewline
	$\text{asy}(r)$ & $r^\I(x,y)=0$ or $r^\I(y,x)=0$ \tabularnewline
\hline
\end{tabular}
}
\end{table*}
In this case, \I is
called a \emph{model} of the ontology. An ontology is \emph{consistent} iff it
has a model.
In fuzzy extensions of \SROIQ, axioms are often allowed to express also strict
inequalities ($<$~and~$>$). However, in the finitely valued setting an axiom
$\axiom{\alpha>d}$ with $d<1$ can be expressed as
$\axiom{\alpha\geqslant\next{d}}$
(and similarly for~$<$ and the direct lower neighbour of~$d$).
%
%For the infinitely valued Zadeh semantics, it can be shown using arguments analogous
%to those in~\cite{SSSP-ECAI06,Stra-AAAI98} that it is possible to find a small enough
%constant $\varepsilon>0$ and replace every axiom $\axiom{\alpha>d}$ with
%$\axiom{\alpha\geqslant d+\varepsilon}$ without affecting consistency.
%it has been shown
%in~\cite{SSSP-ECAI06,Stra-AAAI98} that one can find a suitably small number
%$\varepsilon>0$ and replace $\axiom{\alpha>d}$ with
%$\axiom{\alpha\geqslant d+\varepsilon}$ without affecting consistency.
%%%%%%%%%
%\stbo{I don't know if this works for fuzzy GCIs; maybe we cannot deal with
%infinitely valued Zadeh semantics?}
%\rpn{it works, but has not been proven before (the argument is the same); I rephrased
%the sentence to reflect this}
%\stbo{The problem with GCIs is that via cycles one may be able to force
%$\varepsilon$ to be arbitrarily small, such that one cannot find a specific
%value to add to the axiom.}

In the literature it is also common to find negated role assertions of the form
$\axiom{\lnot r(a,b)\bowtie d}$~\cite{BoDG-IJUF09,BDGS-IJUF12}. However, in our
setting $\axiom{\lnot r(a,b)\leqslant d}$ is equivalent to an assertion of the form
$\axiom{r(a,b)\geqslant d'}$, and similarly for $\axiom{\lnot r(a,b)\geqslant d}$.

\begin{example}\label{ex: ontology}
  Suppose that we have a cloud computing environment consisting of
  multiple servers with their own internal memory and CPU.
%   \ayt{I
%    think we can omit Figure 1.}This environment is depicted in
%  Figure~\ref{fig: cloud environment}.
%	%
%	\begin{figure}
%		\includegraphics[width=\columnwidth]{./shapes/cloud}
%		\caption{A visual representation of a cloud computing environment.}
%		\label{fig: cloud environment}
%	\end{figure}
	%	
	To model such an environment, we use 
                \begin{itemize}
        \item 
          the individual names: \exfo{serverA}, \exfo{serverB},
          \exfo{memA}, \exfo{memB}, \exfo{cpuA}, \exfo{cpuB};
        \item   the concept names: 
          \exfo{CPU},
          \exfo{Memory}, 
          \exfo{Overused},
          \exfo{Server},
          \exfo{ServerWithLimitedResources},
          and 
          \\ \exfo{ServerWithAvailableResources}; 
          and
        \item the role names: \exfo{hasPart} and \exfo{isConnectedTo}.
        \end{itemize}%
	The assertional knowledge of this domain is modeled via the ABox~\Amc %like
	\begin{align*}
		\big\{ &
		\exfo{Server}(\exfo{serverA}), \exfo{CPU}(\exfo{cpuA}),
		  \exfo{Memory}(\exfo{memA}), \\
		& \axiom{\exfo{Overused}(\exfo{cpuA})\geqslant 0.8},
		  \exfo{Overused}(\exfo{memA}), \\
		& \exfo{hasPart}(\exfo{serverA},\exfo{cpuA}),
		  \exfo{hasPart}(\exfo{serverA},\exfo{memA}), \\
		& \axiom{\exfo{ServerWithAvailableResources}(\exfo{serverB})
		  \geqslant 0.6}, \\
		& \axiom{\exfo{isConnectedTo}(\exfo{serverA},\exfo{serverB})
		  \geqslant 0.8} \big\},
	\end{align*}
	which, for example, states that \exfo{cpuA} is overused with degree at
	least~$0.8$, and that the memory \exfo{memA} is also overused with
	degree 1.
	The terminological knowledge of this domain can be modeled via a TBox~\T
	containing axioms like
	\begin{align*}
    \langle\exfo{Server}\sqcap{}
      &\exists\exfo{hasPart}.(\exfo{Overused}\sqcap\exfo{CPU})\sqcap{} \\
      &\exists\exfo{hasPart}.(\exfo{Overused}\sqcap\exfo{Memory}) \\
      &\quad\sqsubseteq\exfo{ServerWithLimitedResources}\geqslant 0.8\rangle,
  \end{align*}
	stating that a server with an overused memory and CPU 
	is a server with limited resources. This implication must hold with a degree
	of at least $0.8$.
	  
  It should be noted that the concepts \exfo{CPU}, \exfo{Memory}, and \exfo{Server} and the 
  role \exfo{hasPart} are essentially \emph{crisp}; i.e., they can only take values in $\{0,1\}$. 
  This information can be easily modeled as part of a fuzzy ontology and handled by the reduction 
  algorithm in~\cite{bobillo2008optimizing}.
  In contrast to this, the concepts \exfo{ServerWithLimitedResources}
  and \exfo{Overused} have a vague nature; that is, they are fuzzy, and the degree to
  which a server has limited resources and the degree to which a CPU
  or a memory card is overused can take values strictly between $0$
  and $1$.  The role \exfo{isConnectedTo} is also fuzzy and it is used
  to declare the connection between two servers.  The higher the connection 
  degree between two servers is, the larger bandwidth they use in
  their communication.
\end{example}

\subsection{Conjunctive Queries}

Based on the semantics other reasoning services than consistency of
ontologies can be defined. In this paper we are interested in
conjunctive query answering. We give the definition of classical
conjunctive queries next.
\begin{definition}[Conjunctive Query]
  Let \NV be a countably infinite set of \emph{variables} disjoint from~\NC,
  \NR, and \NI.  An \emph{atom} is a \emph{concept atom} of the form $A(x)$, a
  \emph{role atom} of the form $r(x,y)$, or an \emph{equality atom} of the form $x\approx y$, where
  $x,y\in\NV\cup\NI$, $A\in\NC$, and $r\in\NR$.
  A \emph{($k$-ary) conjunctive query (CQ)} $q$ is a statement of the form
  \[ (x_1,\dots,x_k)\gets\alpha_1,\dots,\alpha_m, \]
  where $\alpha_1,\dots,\alpha_m$ are atoms, and $x_1,\dots,x_k$ are (not
  necessarily distinct) variables occurring in these atoms. We call
  $x_1,\dots,x_k$ the \emph{distinguished variables} of~$q$.
  $\Var{q}$ denotes the set of all variables and individual names occurring
  in~$q$. If $k=0$, we call $q$ a \emph{Boolean conjunctive query}.

  Let $\I$ be a classical interpretation, $q$ a Boolean CQ, and
  $\pi\colon\Var{q}\rightarrow\Delta^\I$ a function such that
  $\pi(a)=a^\I$ for all $a\in\NI$.
  If $\pi(x)\in A^\I$, then we write $\I\models^\pi A(x)$, and
  $\I\models^\pi r(x,y)$ whenever $(\pi(x),\pi(y))\in r^\I$, and
  $\I\models^\pi x\approx y$ if $\pi(x)=\pi(y)$.
  If $\I\models^\pi\alpha$ for all atoms $\alpha$ in~$q$, we write
  $\I\models^\pi q$ and call $\pi$ a \emph{match} for $\I$ and~$q$.
  We say that $\I$ \emph{satisfies} $q$ and write $\I\models q$ if
  there is a match $\pi$ for $\I$ and~$q$.

  A \emph{($k$-ary) union of conjunctive queries (UCQ)} $q_\text{UCQ}$ is a set
  of $k$-ary conjunctive queries.
  An interpretation \I satisfies a Boolean UCQ $q_\text{UCQ}$, written
  $\I\models q_\text{UCQ}$ if $\I\models q$ for some $q\in q_\text{UCQ}$.
  For a Boolean (U)CQ~$q$ and an ontology~\Onto, we write $\Onto\models q$
  and say that $\Onto$ \emph{entails}~$q$ if $\I\models q$ holds for all 
  models~\I of~\Onto.

  Consider now an arbitrary $k$-ary (U)CQ~$q$ and a $k$-tuple $\ans\in\NI^k$ of
  individual names. We say that $\ans$ is an \emph{answer} to~$q$ w.r.t.\ an
  ontology~\Onto if \Onto entails the Boolean (U)CQ $\ans(q)$ resulting
  from~$q$ by replacing all distinguished variables according to~\ans (and
  possibly introducing new equality atoms if some of the distinguished
  variables in an answer tuple are equal).
\end{definition}
The problem of \emph{query answering} is to compute all answers of a (U)CQ
w.r.t.\ a given ontology.
Query answering can be reduced to query entailment by testing all possible
tuples $a\in\NI^k$, which yields an exponential blow-up.
It is well-known that query entailment and query answering can be mutually
reduced and that decidability and complexity results carry over modulo the
mentioned blow-up~\cite{calvanese1998decidability}.

\begin{example}
\label{exa:cq-instantiation}
  Consider the UCQ $\exfo{isMonitoredBy}$, consisting of the following CQs:
  \begin{align*}
    (y,x) &\gets \exfo{monitors}(x,y), \\
    (x,x) &\gets \exfo{SelfMonitored}(x).
  \end{align*}
  To obtain all answers of this UCQ, we consider all possible tuples
  $(a,b)\in\NI$ and instantiate the CQs as follows:
  \begin{align*}
    () &\gets \exfo{monitors}(b,a), \\
    () &\gets \exfo{SelfMonitored}(a),\ a\approx b,
  \end{align*}
  which results in a Boolean UCQ.
  The latter is entailed by an ontology~\Omc if one can derive that in all
  models of~\Omc either the assertion $\exfo{monitors}(b,a)$ holds, or else
  both $a\approx b$ and $\exfo{SelfMonitored}(a)$ are satisfied.
\end{example}
If the distinguished variables are clear from the context, we may also omit
them and write a CQ simply as a set of atoms.

In fuzzy DLs, conjunctive queries can be of two different types: 
\emph{threshold conjunctive queries} or \emph{general fuzzy 
queries}~\cite{pan2007expressive,straccia2013foundations,straccia2014top}.%
\footnote{In \cite{straccia2014top}, queries are defined that allow for
grouping, aggregation, and ranking. Although we do not consider such queries
here, we generalize our basic queries in Section~\ref{sec:gen:query}.}
Threshold queries ask for tuples of individuals that satisfy a set of
assertions to at least some given
degree. For example, the threshold query
\begin{multline*}
  \{\exfo{Server}(x)\geqslant 1, \exfo{hasPart}(x,y)\geqslant 1,
    \exfo{CPU}(y)\geqslant 1, \\
  \qquad \exfo{Overused}(y)\geqslant 0.6\}
\end{multline*}
asks for all pairs of servers and CPUs such that the CPU is a part
of the server and is also overused to a degree of at least~$0.6$.
\begin{definition}[Threshold Conjunctive Query]\label{def: Threshold 
Conjunctive Query}
  A \emph{degree\lb atom} is an expression of the form $\alpha\geqslant d$, 
  where
  $\alpha$ is an atom and $d\in\chainp$.
  A \emph{($k$-ary) threshold conjunctive query}~\qt is of the form
  \[ (x_1,\dots,x_k) \gets \alpha_1\geqslant d_1,\dots,\alpha_m\geqslant d_m,\]
  where $\alpha_1\geqslant d_1,\dots,\alpha_m\geqslant d_m$ are degree atoms
  and $x_1,\dots,x_k$ are variables.
  As before, $\Var{\qt}$ denotes the set of variables and individuals
  occurring in the threshold CQ $\qt$.

  Let $\I$ be an interpretation, \qt a Boolean threshold CQ, and
  $\pi\colon\Var{\qt}\rightarrow\Delta^\I$ a function that maps each
  $a\in\NI$ to $a^\I$.
  The \emph{degree} of an atom $\alpha=A(x)$ w.r.t.\ $\pi$ is defined as
  $\alpha^\Imc(\pi):=A^\Imc(\pi(x))$, and we set
  $\alpha^\Imc(\pi):=r^\Imc(\pi(x),\pi(y))$ for $\alpha=r(x,y)$; finally, for
  $\alpha=x\approx y$ we define
  $\alpha^\Imc(\pi):=1$ if $\pi(x)=\pi(y)$, and $\alpha^\I(\pi):=0$ otherwise.
  If $\alpha^\I(\pi)\geqslant d$ holds for all degree atoms $\alpha\geqslant d$
  in~\qt, then we write $\I\models^{\pi}\qt$ and call $\pi$ a \emph{match} for
  $\I$ and~$\qt$.
  The notions of satisfaction, entailment, and answers are defined as for
  classical CQs.
\end{definition}
General fuzzy CQs, in contrast, have the same syntax as classical conjunctive
queries. Their answers are the tuples of individuals satisfying them to a 
degree
greater than~$0$, together
with the degree to which the query is satisfied. For
example,
\begin{equation}
  \{\exfo{Server}(x),\exfo{hasPart}(x,y),\exfo{CPU}(y),\exfo{Overused}(y)\}
  \label{eq: server fuzzy query}
\end{equation}
asks for all overused CPUs that belong to a server, along with the degree
to which these CPUs are overused. To obtain the degree of the query from the
individual degrees of the atoms, the fuzzy operator interpreting the
conjunction is used.
\begin{definition}[Fuzzy Conjunctive Query]\label{def: Fuzzy
    Conjunctive Queries}
  A \emph{($k$-ary) fuzzy conjunctive query}~\qf is of the form
  \[ (x_1,\dots,x_k)\gets \alpha_1,\dots,\alpha_m, \]
  where $\alpha_1,\dots,\alpha_m$ are atoms and $x_1,\dots,x_k$ are variables.
	Let $\I$ be an interpretation, $\qf$ a Boolean fuzzy $\cq$, and $\pi$ a
  mapping as in Definition~\ref{def: Threshold Conjunctive Query}.
	If $\bigotimes_{\alpha\in\qf}\alpha^\Imc(\pi)\geqslant d>0$, then we write
  $\I\models^{\pi}\qf\geqslant d$ and call $\pi$ a \emph{match} for $\I$ and
  $\qf$ with a \emph{degree} of at least~$d$.
	We say that $\I$ \emph{satisfies} $\qf$ with a degree of at least~$d$ and
	write $\I\models \qf\geqslant d$ if there is such a match.
	If $\I\models \qf\geqslant d$ for all models $\I$ of an ontology~\Onto, we
	write $\Onto\models \qf\geqslant d$ and say that $\Onto$ \emph{entails}~$\qf$
	with a degree of at least~$d$.
        Finally, a tuple
        $\ans\in\NI^k$ is an \emph{answer} to a $k$-ary
        fuzzy CQ~\qf w.r.t.\ \Onto with a degree of at least~$d$ if
        \Onto entails~$\ans(\qf)$ with a degree of at least~$d$.
\end{definition}
The query entailment problem for a (Boolean) threshold $\cq$ is to decide
whether $\Onto\models\qt$. For fuzzy $\cq$s, we may consider two variants of
the query entailment problem, namely
\begin{itemize}
  \item to decide whether $\Onto\models \qf\geqslant d$ for a given
    $d\in\chainp$, or
  \item to find the best entailment degree
    $\max\{d\mid \Onto\models\qf\geqslant d\}$.
\end{itemize}
Since we consider only finitely valued semantics over the chain~\chain,
these two problems can be polynomially reduced to each other.
As for classical query answering, it suffices to analyze the complexity of
query \emph{entailment}; the results can then be transferred to query
answering~\cite{calvanese1998decidability}.

\begin{example}
  Consider the following queries:
	\begin{align*}
	  \qt &:=
	    \{\exfo{hasPart}(x,y)\geqslant 1,\exfo{Overused}(y)\geqslant 0.9\},\\
	  \qf &:=
	    \{\exfo{hasPart}(x,y),\exfo{Overused}(y)\},
	\end{align*}
	and the ontology from Example~\ref{ex: ontology}.
	An answer to the query~\qt is $(\exfo{serverA},\exfo{memA})$, but not
	$(\exfo{serverA},\exfo{cpuA})$ since $\exfo{cpuA}$ is only overused to
	degree~$0.8$.
	The answers to~\qf are the pairs $(\exfo{serverA},\exfo{cpuA})$ with
	degree $\geqslant 0.8$ and $(\exfo{serverA},\exfo{memA})$ to degree~$1$.
\end{example}
%
%\stbo{Do we need fuzzy UCQs? It seems that we never use them.}
%A \emph{union of threshold $\cq$s} is a set of threshold CQs having the same
%distinguished variables.
%	%
%An ontology $\Onto$ entails a Boolean union of threshold CQs $U_\qt$
%(written $\Onto\models U_\qt$), if for every model $\I\models
%\Onto$ there exists some $\qt\in U_\qt$ such that $\I\models \qt$.
%	%
%Another type of union is one consisting of a set of fuzzy CQs, again over the
%same distinguished variables.
%	%
%An ontology $\Onto$ entails such a union $U_\qf$ to a degree of at
%least $d\in(0,1]$ ($\Onto\models U_\qf\geqslant d$), if for
%every model $\I\models \Onto$ there exists some $\qf\in U_\qf$ such
%that $\I\models \qf\geqslant d$.

%\note{To say something about less or equal degrees}
\begin{remark}
  A threshold CQ with inequalities using~$\leqslant$ would correspond to a
  classical CQ containing negated role atoms, for which query answering is
  undecidable even in very inexpressive DLs~\cite{GIKK-RR13,Rosa-ICDT07}.
  Similarly, upper bounds for fuzzy conjunctive queries~\qf, i.e.\ asking
  whether $\Omc\models\qf\leqslant d$, can be seen as a generalized form of
  disjunction of (negated) query atoms.
  For these reasons, we consider only inequalities using $\geqslant$.
\end{remark}

Before we turn to answering such queries over fuzzy
ontologies, we describe the reduction of expressive finitely valued fuzzy
ontologies to classical ones.

% !TEX root = ./JoDSPaper.tex

\section{Reduction of Finitely Valued Fuzzy Ontologies to Classical
  Ontologies}
\label{sec:reduction}

A popular reasoning technique for fuzzy DLs based on finite chains is
the reduction of the fuzzy ontology to a classical one. This allows to
use existing DL systems to reason in the fuzzy description logic.
However, a major drawback of existing approaches for finite chains
using arbitrary t-norms (see~\cite{BDGS-IJUF12,bobillo2011reasoning,MaPe-RR-14}) is
that this reduction introduces an exponential blow-up in the size of
the fuzzy ontology. While this handicap can be remedied by our normalization
step described in Section~\ref{sec:normalization} (see also the experiments in
Section~\ref{sec:eval-norm}), another obstacle needs to be addressed first: the
reduction proposed in~\cite{BDGS-IJUF12,bobillo2011reasoning,MaPe-RR-14} is not
correct for number restrictions.
In the following, we describe this problem in detail and propose a (partial)
solution.

\subsection{Treating Number Restrictions}
\label{subsec:number-restrictions}

The reduction in~\cite{BDGS-IJUF12,bobillo2011reasoning,MaPe-RR-14} is based 
on the idea to simulate number
restrictions by existential restrictions in the following way.
For a number restriction $\atLeast{m}{r}{C}$, the new concept names
$B_1,\dots,B_m$ and the axioms
\begin{itemize}
\item $\top\sqsubseteq B_1\sqcup\dots\sqcup B_m \text{ and}$
\item $B_i\sqcap B_j\sqsubseteq\bot \text{ for all }i,j,\ 1\leqslant
  i<j\leqslant m.$
\end{itemize}
are introduced, which require them to form a partition.
Subsequently, the number restriction $\atLeast{m}{r}{C}$ is replaced by
the concept
$\exists r.(C\sqcap B_1)\sqcap\ldots\sqcap\exists r.(C\sqcap B_m)$.
The following classical example shows that this replacement does not
preserve the semantics of the number restrictions, and thus cannot
be correct in the fuzzy case, either.
\begin{example}
\label{exa:number-restrictions-incorrect}
	Consider the following ABox and TBox:
	\begin{align*}
	  \A :=
	    \{& r(a,a),\,r(a,b),\,r(b,a),\,r(b,c),\,r(c,b),\,r(c,c), \\
	      & a\neq b,\,b\neq c,\,a\neq c \}, \\
	  \T :=
	    \{& \top\sqsubseteq\atMost{2}{r}{\top},\,
	      \top\sqsubseteq\atLeast{2}{r}{\top} \}.
	\end{align*}
  A simple model for $\Omc=(\A,\T, \emptyset)$ is given by $\Delta^\I:=\{a,b,c\}$ and
  $r^\I:=\{(a,a),(a,b),(b,a),(b,c),(c,b),(c,c)\}$. Thus,
  the ontology $\Omc$ is consistent.
	By replacing $\atLeast{2}{r}{\top}$ according to the method described above,
	we obtain the TBox
	\begin{align*}
	  \T' :=
	    \{& \top\sqsubseteq\atMost{2}{r}{\top},\,
	      \top\sqsubseteq\exists r.B_1\sqcap\exists r.B_2, \\
	    & B_1\sqcap B_2\sqsubseteq\bot,\,
	      \top\sqsubseteq B_1\sqcup B_2 \}.
	\end{align*}
	We show that the resulting ontology $(\A,\T',\emptyset)$ is
        inconsistent. Assume to the contrary that there exists a model $\I'$ of
        the ontology $\Omc'= (\A, \T', \emptyset)$. Without loss of
        generality, suppose that it interprets the individual names $a,b,c$ as
        themselves. Thus, we must have $r^\I\subseteq r^{\I'}$.
	There are only eight possible combinations for $a,b,c$ belonging to either
	$B_1$ or $B_2$.
	Suppose first that $a,b\in B_1^{\I'}$ and $c\in B_2^{\I'}$. Then by the axiom
	$\top\sqsubseteq\exists r.B_1\sqcap\exists r.B_2$ the individual~$a$ must
	have yet another $r$-successor $x\in B_2^{\I'}$.
	However, this contradicts the GCI $\top\sqsubseteq\atMost{2}{r}{\top}$.
	Similar arguments apply for all other combinations, and therefore the
	ontology is inconsistent.
\end{example}
It should be noted that for \godel and Zadeh semantics alternative (correct)
reductions of number restrictions exist~\cite{bobillo2009fuzzy,BDGS-IJUF12}.

We now propose an alternative encoding of number restrictions when using other
fuzzy semantics, avoiding the problem exhibited by
Example~\ref{exa:number-restrictions-incorrect}.
Intuitively, instead of using a partition of the target concept~$C$ of a
restriction $\atLeast{m}{r}{C}$, we will partition the role~$r$.
Note first that at-most restrictions can be expressed using negation and
at-least restrictions; that is, $\atMost{m}{r}{C}$ has the same semantics as
$\lnot(\atLeast{(m+1)}{r}{C})$ (cf.\ Table~\ref{table: SROIQ concepts}).
Hence, in the following we focus on methods for handling at-least number
restrictions.
Furthermore, we can assume without loss of generality that they only occur in
axioms of the forms
\[ \axiom{A\sqsubseteq\atLeast{m}{r}{B}\geqslant d} \quad \text{and} \quad
  \axiom{\atLeast{m}{r}{B}\sqsubseteq A\geqslant d}, \]
where $A$ and $B$ are concept names (cf.\ Section~\ref{sec:normalization}).

Axioms of the first kind can be equivalently expressed using $m$ fresh role
names $r_1,\dots,r_m$ in the following axioms:
\[ \axiom{A\sqsubseteq\exists r_i.C \geqslant d}, \quad
  r_i\sqsubseteq r, \quad
  \text{dis}(r_j,r_k), \]
for all $i,j,k\in\{1,\dots,m\}$ with $j<k$. This is correct due to the minimum
used in the semantics of at-least restrictions.
More precisely, every model of the original axiom can be extended by a
suitable interpretation of the new role names to a model of the resulting
axioms, and every model of the latter is immediately a model of the former.
Hence, we can eliminate all at-least restrictions that occur on the right-hand
side of GCIs (and all at-most restrictions that occur on the left-hand side of
GCIs).

Unfortunately, this approach does not work for at-least restrictions occurring
on the left-hand side of GCIs. The reason is that the presence of $m$ many
$r$-successors satisfying~$C$ does not imply that these successors can be
reached using one of the disjoint roles $r_1,\dots,r_m$. However, this can be
expressed using the additional role axiom
\begin{equation}
\label{eq:role-disjunction}
  r\sqsubseteq r_1\sqcup\dots\sqcup r_m,
\end{equation}
which involves a \emph{role disjunction} that is interpreted using the maximum,
i.e.,
\[ (r_1\sqcup\dots\sqcup r_m)^\Imc(x,y):=\max_{i=1}^m r_i^\Imc(x,y). \]
Role disjunction is an example of a (safe) Boolean role constructor, which can
be added to most classical DLs without increasing the complexity of
reasoning~\cite{RuKH-JELIA08}.
Moreover, some query answering procedures for classical DLs even work in the
presence of such constructors~\cite{CaEO-IJCAI09}.
Unfortunately, to the best of our knowledge, role disjunctions are not yet
supported by any  classical DL reasoner.

In the presence of axiom~\eqref{eq:role-disjunction} and the role disjointness
axioms from above, the GCI
$\axiom{\atLeast{m}{r}{C}\sqsubseteq A \geqslant d}$ can now be equivalently
expressed as
\[ \axiom{\exists r_1.C\sqcap\dots\sqcap\exists r_m.C \sqsubseteq A
  \geqslant d}. \]
Unlike the incorrect reduction for number restrictions that was first
proposed in~\cite{bobillo2011reasoning}, our approach does not partition the
\emph{range} of the role~$r$ in the number restriction, but rather the role 
itself, and hence it
correctly treats the case where a domain element is an $r$-successor of two
different elements that are subject to the same number restriction on~$r$
(recall Example~\ref{exa:number-restrictions-incorrect}).
However, like the approach
of~\cite{BDGS-IJUF12,bobillo2011reasoning,MaPe-RR-14}, this incurs an
exponential blow-up in the largest number occurring in number restrictions, if
these numbers are represented in the ontology using a binary encoding. 
The reduction is polynomial if we assume unary encoding of numbers.

Since role disjunctions are not supported by \SROIQ or OWL\,2, we will
restrict the following investigation to \emph{unqualified} number restrictions
of the form $\uatLeast{m}{r}:=\atLeast{m}{r}{\top}$ and
$\uatMost{m}{r}:=\atMost{m}{r}{\top}$, i.e., to the fuzzy logic \chain-\SROIN.
However, we want to emphasize that we can easily treat qualified number
restrictions in the following reduction if the classical target language
supports role disjunctions.
It is straightforward to extend the reduction to deal even with \SROIQBs, the
extension of \SROIQ with full Boolean role expressions (which satisfy a safety
condition)~\cite{RuKH-JELIA08}; the reduction of the role constructors is
similar to the one for concepts.

\subsection{Ontology normalization for \chain-\SROIN ontologies}
\label{sec:normalization}

The reason that the reductions described
in~\cite{bobillo2009fuzzy,BDGS-IJUF12,bobillo2011reasoning} can cause an
exponential blow-up in the size of the ontology is that concept constructors
may be nested to arbitrary depths. In this subsection, we propose a
normalization step to ensure that each GCI and concept assertion contains at
most one concept constructor, and that each complex role inclusion contains at
most two roles on the left-hand side.
Because of this, the subsequent reduction of a \chain-\SROIN ontology~\Onto to
a classical \SROIN ontology~$\Onto\crisp$ causes only a linear blow-up in the
size of~\Onto (and a quadratic blow-up in the size of~\chain).
For an experimental evaluation of the resulting difference in ontology size
and reasoning performance, see Section~\ref{sec:eval-norm}.

The normalization proceeds by exhaustively replacing each axiom by a
set of axioms according to Table~\ref{table: ontology normalization}.
\begin{table*}[tb]
\centering
\caption{Normalization rules for \chain-\SROIN ontologies}
\label{table: ontology normalization}
{
\renewcommand{\arraystretch}{1.2}
\begin{tabular}{%
  >{\raggedleft}p{0.17\textwidth}%
  @{\ \ $\leadsto$\ \ }%
  >{\raggedright}p{0.28\textwidth}%
  >{\raggedleft}p{0.14\textwidth}%
  @{\ \ $\leadsto$\ \ }%
  >{\raggedright}p{0.28\textwidth}}
\toprule
$\axiom{C\sqsubseteq D\geqslant d}$ &
$\axiom{C\sqsubseteq A_D\geqslant d}$,~\ $A_D\sqsubseteq D$ &

$\axiom{C(a)\geqslant d}$ &
$\axiom{A_C(a)\geqslant d}$,~$A_C\sqsubseteq C$ \tabularnewline

$\axiom{A\sqsubseteq C\sqcap D\geqslant d}$ &
$\axiom{A\sqsubseteq A_C\sqcap A_D\geqslant d}$,~$A_C\sqsubseteq C$,~ %
  $A_D\sqsubseteq D$ &

$\axiom{C\sqcap D\sqsubseteq A\geqslant d}$ &
$\axiom{A_C\sqcap A_D\sqsubseteq A\geqslant d}$,~$C\sqsubseteq A_C$,~%
  $D\sqsubseteq A_D$ \tabularnewline

$\axiom{A\sqsubseteq C\sqcup D\geqslant d}$ &
$\axiom{A\sqsubseteq A_C\sqcup A_D\geqslant d}$,~$A_C\sqsubseteq C$,~%
  $A_D\sqsubseteq D$ &

$\axiom{C\sqcup D\sqsubseteq A\geqslant d}$ &
$\axiom{A_C\sqcup A_D\sqsubseteq A\geqslant d}$,~$C\sqsubseteq A_C$,~%
  $D\sqsubseteq A_D$ \tabularnewline

$\axiom{A\sqsubseteq\neg C\geqslant d}$ &
$\axiom{A\sqsubseteq\neg A_C\geqslant d}$,~$C\sqsubseteq A_C$ &

$\axiom{\neg C\sqsubseteq A\geqslant d}$ &
$\axiom{\neg A_C\sqsubseteq A\geqslant d}$,~$A_C\sqsubseteq C$ \tabularnewline

$\axiom{A\sqsubseteq\exists r.C\geqslant d}$ &
$\axiom{A\sqsubseteq\exists r.A_C\geqslant d}$,~$A_C\sqsubseteq C$ &

$\axiom{\exists r.C\sqsubseteq A\geqslant d}$ &
$\axiom{\exists r.A_C\sqsubseteq A\geqslant d}$,~$C\sqsubseteq A_C$
  \tabularnewline

$\axiom{A\sqsubseteq\forall r.C\geqslant d}$ &
$\axiom{A\sqsubseteq\forall r.A_C\geqslant d}$,~$A_C\sqsubseteq C$ &

$\axiom{\forall r.C\sqsubseteq A\geqslant d}$ &
$\axiom{\forall r.A_C\sqsubseteq A\geqslant d}$,~$C\sqsubseteq A_C$
  \tabularnewline

$\axiom{r_1r_2r_3\ldots r_m\sqsubseteq r\geqslant d}$ &
$\axiom{r_{r_1r_2}r_3\ldots r_m\sqsubseteq r\geqslant d}$,~ %
  $r_1r_2\sqsubseteq r_{r_1r_2}$ \tabularnewline
\bottomrule
\end{tabular}
}
\end{table*}
In that table, $A,A_C,A_D$ denote concept names, $\top$, or~$\bot$; $C,D$ are
complex concepts that are neither concept names, $\top$, nor~$\bot$; and
$r_1,\dots,r_m$, $r$, and~$r_{r_1r_2}$ are roles.
$A_C$ and $A_D$ are fresh concept names that abbreviate the concepts~$C$
and~$D$, respectively.
In the last rule, $r_{r_1r_2}$ is a fresh role name that stands for the role
composition of~$r_1$ and~$r_2$.
For simplicity, we have given the rules for conjunctions and
disjunctions only for the case where both operands are complex
concepts. However, if only one of them is a complex concept, we would
not introduce a new concept name for the other operand.
Note that nominals, unqualified number restrictions, and local reflexivity
concepts do not need to be normalized.

It should be noted
that this reduction is not correct under Zadeh semantics due to the
properties of the implication function.
However, \cite{BoDG-IJUF09} provides a different reduction for this case that
does not exhibit an exponential blow-up even without normalization.
Hence, we consider in the following result only semantics that are based on
finitely valued t-norms and their induced operators \implication, \negation,
and \tconorm.
\begin{proposition}
\label{prop:normalization-correct}
  Let $\Onto'$ be the ontology resulting from the exhaustive application of the
  rules in Table~\ref{table: ontology normalization} to a \chain-\SROIN
  ontology~\Onto.
  Under t-norm based semantics, every model of~\Onto can be extended to a
  model of~$\Onto'$ by interpreting the new concept names~$A_C$ like~$C$ and
  $r_{r_1r_2}$ like~$r_1r_2$.%
\footnote{Where $(r_1r_2)^\I(x,z):=
  \sup_{y\in\Delta^\I}r_1^\I(x,y)\tnorm r_2^\I(y,z)$
  (cf.\ Table~\ref{table: SROIQ axioms}).}
  Moreover, every model of~$\Onto'$ is already a model of~\Onto.
\end{proposition}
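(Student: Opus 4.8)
The plan is to prove the two directions separately, since the proposition is an equivalence claim phrased as two implications: that every model of $\Onto$ extends to a model of $\Onto'$, and that every model of $\Onto'$ is already a model of $\Onto$. Because the normalization is an \emph{exhaustive} application of the rules in Table~\ref{table: ontology normalization}, the cleanest strategy is to prove the statement for a \emph{single} rule application and then argue by induction on the number of rule applications. Each rule replaces one axiom by a small set of axioms that introduce at most one fresh concept name $A_C$ (or fresh role name $r_{r_1r_2}$); if I establish that each single step is model-preserving in the stated sense, the full result follows by composing the extensions, since the fresh names introduced by distinct rule applications are pairwise distinct and the termination of the rewriting (each rule strictly reduces the nesting depth / left-hand-side role length of the modified axiom) guarantees the induction is well-founded.

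For the forward direction, I would fix a model $\I$ of $\Onto$ and define the extension $\I'$ by setting $A_C^{\I'}(x) := C^\I(x)$ for each fresh concept name and $r_{r_1r_2}^{\I'}(x,z) := (r_1r_2)^\I(x,z) = \sup_{y} r_1^\I(x,y)\tnorm r_2^\I(y,z)$ for each fresh role name, leaving everything else unchanged. I then check that $\I'$ satisfies each axiom produced by the rule. The auxiliary axioms $A_C\sqsubseteq C$ (or $C\sqsubseteq A_C$, $r_1r_2\sqsubseteq r_{r_1r_2}$) hold with degree $1$ \emph{by construction}, since under the residuum $x\implication x = 1$ and the monotonicity of $\implication$; this is where the t-norm-based semantics matters, because the residuum satisfies $x\implication y = 1$ iff $x\leqslant y$, and with equality $A_C^{\I'}=C^\I$ the inclusion degree is exactly $1\geqslant d$. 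The rewritten main axiom (e.g.\ $\axiom{A\sqsubseteq A_C\sqcap A_D\geqslant d}$) then has the same truth value under $\I'$ as the original axiom under $\I$, because the semantics of every constructor is computed pointwise and $A_C,A_D$ are interpreted identically to $C,D$; hence it is satisfied. The one rule needing slightly more care is the role-inclusion rule, where I must verify that replacing $r_1r_2\ldots r_m$ by $r_{r_1r_2}r_3\ldots r_m$ leaves the composed supremum unchanged, which follows from associativity of \tnorm and the definition $r_{r_1r_2}^{\I'}=(r_1r_2)^\I$.

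For the backward direction, I would fix a model $\I'$ of $\Onto'$ and show $\I'\models\Onto$ directly, without modifying $\I'$. Here the point is subtler: $\I'$ need not interpret $A_C$ exactly as $C$, only to satisfy the inclusion $A_C\sqsubseteq C$ (say) with degree $1$, i.e.\ $A_C^{\I'}(x)\leqslant C^{\I'}(x)$ for all $x$. The argument is to recover the original axiom from the rewritten axiom together with the auxiliary inclusion, using monotonicity of the concept constructors in the appropriate argument positions. For instance, from $A\sqsubseteq\exists r.A_C$ with degree $\geqslant d$ and $A_C\sqsubseteq C$, the monotonicity of $\exists r.(\cdot)$ in its concept argument yields $(\exists r.A_C)^{\I'}\leqslant(\exists r.C)^{\I'}$ pointwise, hence $A\sqsubseteq\exists r.C$ holds with degree $\geqslant d$. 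The direction of the auxiliary inclusion in each rule (whether $A_C\sqsubseteq C$ or $C\sqsubseteq A_C$) is exactly chosen so that this monotonicity argument goes through; I would note that for the negation rule one uses \emph{antitonicity} of $\negation$, which flips the required direction of the inclusion (hence the table lists $C\sqsubseteq A_C$ on the left-hand-side negation rule). The main obstacle I anticipate is this backward direction, specifically keeping track, rule by rule, of the correct polarity of each auxiliary inclusion and verifying that the monotonicity/antitonicity of the corresponding constructor matches it; the value-restriction case $\forall r.(\cdot)$ and the negation case are the ones where a sign error would break the proof, so those deserve explicit checking rather than an appeal to ``similarly''.
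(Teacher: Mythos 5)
Your proof is correct. The paper does not actually spell out a proof of this proposition---it is stated and then dismissed as a ``simple observation''---and your argument is precisely the one it leaves implicit: induction on single rule applications; the forward direction by interpreting each fresh name definitionally so that the auxiliary inclusions hold to degree $1$ (using that the residuum satisfies $x\implication x=1$, and noting, as you correctly do, that this is exactly where Zadeh semantics would fail); the backward direction by reading a degree-$1$ inclusion as a pointwise $\leqslant$ via the residuum property and propagating it through the rewritten axiom by monotonicity of the constructors, with antitonicity for $\lnot$ and for the first argument of $\implication$. One small caution: your parenthetical about which negation rule carries $C\sqsubseteq A_C$ is ambiguously worded---in the table it is the rule for $\axiom{A\sqsubseteq\lnot C\geqslant d}$ (negation occurring on the \emph{right}-hand side of the GCI) that introduces $C\sqsubseteq A_C$, whereas $\axiom{\lnot C\sqsubseteq A\geqslant d}$ introduces $A_C\sqsubseteq C$; your substantive point that antitonicity flips the required polarity is nevertheless exactly right.
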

This simple observation immediately shows that $\Onto'$ is consistent iff \Onto
is consistent. Moreover, it allows us to prove correctness of the normalization
procedure also with respect to the other reasoning tasks we will consider in
the following sections.
Furthermore, it is easy to see that the normalization could be extended to deal
also with qualified number restrictions (\Qmc).

While this procedure involves the introduction of linearly many new concept
names, it allows us to circumvent the exponential blow-up exhibited by previous
reductions.

\begin{remark}
  The reason why this normalization reduces the complexity of the following
  reduction is that it ensures that each axiom contains at most three
  occurrences of concept or role names.
  However, we will see in the following subsection that concept and role names
  that are interpreted classically, i.e.\ can take only the values~$0$ and~$1$,
  do not take part in the reduction.
  Hence, it is enough to ensure that each axiom contains at most three
  occurrences of \emph{fuzzy} concept or role names. Such axioms do not need to
  be reduced any further.
  Nevertheless, all complexity results concerning the reduction in the
  following section remain valid.
\end{remark}

\begin{example}
  The normalized form of the TBox \T containing the GCI from
  Example~\ref{ex: ontology} is as follows:
  \begin{align*}
    \{ &\exfo{Overused}\sqcap\exfo{CPU}\sqsubseteq A, \\
    &\exfo{Overused}\sqcap\exfo{Memory}\sqsubseteq B, \\
    &\exists\exfo{hasPart}.A \sqsubseteq C, \displaybreak[0]\\
    &\exists\exfo{hasPart}.B \sqsubseteq D, \displaybreak[0]\\
    &\exfo{Server}\sqcap C\sqsubseteq E, \\
    &\langle E\sqcap D \sqsubseteq \exfo{ServerWithLimitedResources}
      \geqslant 0.8\rangle \}
  \end{align*}
  However, since the GCI in \T contains only three occurrences of
  names of fuzzy concepts (two times \exfo{Overused} and once
  \exfo{ServerWithLimitedResources}), we can use \T as it is in the
  following reduction.
\end{example}
We will assume in the following that \Onto is already normalized.
The remainder of the reduction is very similar to the one described
in~\cite{BDGS-IJUF12} (except for number restrictions).

\subsection{The Reduction Algorithm}
\label{subsec:reduction}

Each concept name and role name in~\Onto is mapped onto a set of concepts and roles
corresponding to their \emph{$\alpha$-cuts}, which are crisp sets containing
all elements that belong to a fuzzy set to at least a given degree $\alpha$.
For example, if the concept name $\exfo{Overused}$ describes the degree to
which a CPU is overused, then $\exfo{Overused}_{\geqslant 0.6}$
represents the set of CPUs that are overused to a degree of at least~$0.6$.
It is clear that we do not need to consider the value~$0$ for such cuts, as
$A_{\geqslant 0}$ always describes the whole domain.
We may also refer to concept names of the form $A_{>d}$ for $d\in\chain$ and
$d<1$, which is a short-hand notation for $A_{\ge\next{d}}$, and similarly for
role names.

The $\SROIN$ ontology $\Onto\crisp$ obtained from the reduction has the
following form:
\begin{itemize}
  \item To preserve the semantics of $\alpha$-cuts of concept and role names,
    the following axioms are added to~$\Onto\crisp$ for all $A\in\NC$,
    $r\in\NR$, and $d\in\chain$ with $0<d<1$:
    \[
      A_{>d}\sqsubseteq A_{\geqslant d},\ r_{>d}\sqsubseteq r_{\geqslant d}.
    \]
  \item Each complex concept~$C$ appearing in~\Onto is mapped to the
    complex concept $\rho(C,\geqslant d)$ that represents its
    $\alpha$-cut regarding degree $d$, as defined in the first part of
    Table~\ref{table:reduction} in the appendix.
  \item Each axiom in~\Onto is then mapped to a classical axiom or set of
    axioms in $\Onto\crisp$ according to the mapping~$\kappa$ defined in the
    second part of Table~\ref{table:reduction}.
\end{itemize}
For a more detailed analysis of the reduction rules, the interested reader may refer
to~\cite{bobillo2009fuzzy,BDGS-IJUF12,bobillo2011reasoning}.
We provide a detailed proof of correctness in the appendix.
\begin{theorem}
\label{thm:crispification-correct}
  Let \Onto be a \chain-\SROIN ontology. Then \Onto has a fuzzy model
  iff its reduced form $\Onto\crisp$ has a classical model.
\end{theorem}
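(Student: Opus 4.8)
The plan is to prove both directions by exhibiting explicit translations between fuzzy models of \Onto and classical models of $\Onto\crisp$ that share the same domain and the same interpretation of individual names, and to connect them through a single structural correspondence between the fuzzy truth value of a concept and classical membership in its reduction $\rho(\cdot,\geqslant d)$. Concretely, from a fuzzy model \I I would build the \emph{cut interpretation} \J with $\Delta^\J:=\Delta^\I$, $a^\J:=a^\I$, and for every $A\in\NC$, $r\in\NR$, and every $d\in\chainp$ set $A_{\geqslant d}^\J:=\{x\mid A^\I(x)\geqslant d\}$, $A_{>d}^\J:=\{x\mid A^\I(x)>d\}$, and analogously for the role cuts. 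Conversely, from a classical model \J I would reconstruct a fuzzy model \I by reading off the largest cut an element belongs to, i.e.\ $A^\I(x):=\max\{d\in\chainp\mid x\in A_{\geqslant d}^\J\}$ (and $0$ if this set is empty), and likewise for roles; here the cut axioms $A_{>d}\sqsubseteq A_{\geqslant d}$ and $r_{>d}\sqsubseteq r_{\geqslant d}$ guarantee that the cuts are nested, i.e.\ $A_{\geqslant\next{d}}^\J\subseteq A_{\geqslant d}^\J$, so that this maximum is well defined and the reconstruction faithfully recovers every cut.

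The heart of the argument is an auxiliary lemma, proved by induction on the structure of a complex concept $C$ occurring in \Onto, stating that for the paired interpretations above and for every $d\in\chainp$ we have $x\in\rho(C,\geqslant d)^\J$ iff $C^\I(x)\geqslant d$ (with the analogous statement for strict cuts and for complex roles). The base cases --- concept names, $\top$, $\bot$, and fuzzy nominals --- follow directly from the definition of the cuts and the cut axioms. Each inductive case then reduces the $d$-cut of a compound concept to a Boolean combination of cuts of its immediate subconcepts, matching exactly the definition of $\rho$ in Table~\ref{table:reduction}: conjunction and disjunction are handled by ranging over the finitely many pairs $(e,e')\in\chainp\times\chainp$ with $e\tnorm e'\geqslant d$ (resp.\ $e\tconorm e'\geqslant d$); negation and value restrictions are handled through the residuation property of \implication and the definition $\negation x=x\implication 0$; and the unqualified restriction $\uatLeast{m}{r}$, whose fuzzy value is the supremum over pairwise different $y_1,\dots,y_m$ of $\min_i r^\I(x,y_i)$, has $d$-cut precisely $(\uatLeast{m}{r_{\geqslant d}})^\J$, so that $\rho(\uatLeast{m}{r},\geqslant d)=\uatLeast{m}{r_{\geqslant d}}$. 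Once this lemma is available, satisfaction of each axiom transfers: a fuzzy GCI, assertion, or role axiom holds in \I exactly when the corresponding classical axioms produced by $\kappa$ hold in \J, because both express the same inequality on the relevant cut.

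With the correspondence lemma in hand, the two directions of the theorem are short. For the ``only if'' direction I verify that the cut interpretation \J satisfies every axiom of $\Onto\crisp$: the cut axioms hold by construction, and each reduced axiom holds by applying the lemma to the (finitely many) concepts and degrees mentioned in the original axiom together with the semantics in Table~\ref{table: SROIQ axioms}. For the ``if'' direction I check that the reconstructed \I satisfies \Onto, again by invoking the lemma for the reconstructed cuts, using the cut axioms to ensure that the reconstruction is consistent with \J on every cut and not merely on the maximal one.

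I expect the main obstacle to be the inductive cases for the t-norm-induced operators --- conjunction, disjunction, negation, and especially the residuum occurring in $\forall r.C$ --- where the $d$-cut of a compound concept must be expressed as a finite Boolean combination of cuts of the subconcepts. Verifying each such local equivalence requires a careful finite case analysis over the chain \chain, relying on monotonicity of the operators and on the residuation property $(x\tnorm y)\leqslant z$ iff $y\leqslant(x\implication z)$; this is also the place where the incorrect qualified-number-restriction reduction of Example~\ref{exa:number-restrictions-incorrect} breaks down, so it is essential that we work with unqualified restrictions in \chain-\SROIN. The normalization of Proposition~\ref{prop:normalization-correct} does not affect correctness here, but it is what keeps the Boolean combinations produced by $\rho$ from nesting uncontrollably, and hence what makes the resulting $\Onto\crisp$ only polynomially larger than \Onto.
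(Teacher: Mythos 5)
Your proposal is correct and follows essentially the same route as the paper: both directions are established via the cut interpretation $A_{\geqslant d}^\J:=\{x\mid A^\I(x)\geqslant d\}$ and its inverse reconstruction $A^\I(x):=\max\{d\mid x\in A_{\geqslant d}^\J\}$, with the central structural lemma $x\in\rho(C,\geqslant d)^\J$ iff $C^\I(x)\geqslant d$ proved by induction on $C$ and then used to transfer satisfaction of each axiom (this is exactly Propositions~\ref{prop 1:I->J} and~\ref{prop: O -> Ocrisp} together with Lemmas~\ref{lem: soundness} and~\ref{lem: completeness} in the appendix). The only cosmetic difference is that you treat $A_{>d}$ as a separate cut, whereas the paper defines it as shorthand for $A_{\geqslant\next{d}}$; this does not affect the argument.
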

Our normalization procedure allows us to show the following improved complexity
bounds. The proof of the following lemma can be found in Appendix~\ref{sec:ap-lem13}.
\begin{lemma}
\label{lem:crispification-complexity}
  For a normalized \chain-\SROIN ontology~\Onto, the size of $\Onto\crisp$ is
  linear in the size of~\Onto and quadratic in the size of~\chain.
\end{lemma}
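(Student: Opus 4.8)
The plan is to bound the size of $\Onto\crisp$ by accounting separately for the three kinds of axioms it contains. First I would observe that because \Onto is normalized, each GCI, concept assertion, and complex role inclusion contains at most a constant number of concept and role name occurrences (at most three concept occurrences per concept axiom and at most two roles on the left of each role inclusion, by the guarantees of Table~\ref{table: ontology normalization}). This is the crucial leverage: the reduction mapping $\kappa$ from Table~\ref{table:reduction} acts locally on each axiom, replacing each concept/role name by its $\alpha$-cut versions and producing, for each degree $d\in\chain$, a bounded number of classical axioms whose shape is fixed by the normal form. Since there are $|\chain|$ relevant cut degrees (in fact $|\chain|-1$, as the cut at $0$ is trivial and need not be generated), each normalized axiom of \Onto gives rise to $O(|\chain|)$ classical axioms of \emph{constant} size in $\Onto\crisp$.

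Next I would tally the "bookkeeping" axioms that are independent of the original axioms, namely the monotonicity axioms $A_{>d}\sqsubseteq A_{\ge d}$ and $r_{>d}\sqsubseteq r_{\ge d}$ added for every $A\in\NC$, $r\in\NR$, and every $d$ with $0<d<1$. There are $O((|\NC|+|\NR|)\cdot|\chain|)$ of these, each of constant size, so their total contribution is $O((|\NC|+|\NR|)\cdot|\chain|)$, which is linear in $|\Onto|$ and linear in $|\chain|$. Summing the two contributions, the number of fresh concept and role names introduced is linear in $|\Onto|$ times $|\chain|$, and the total number of axioms is $O(|\Onto|\cdot|\chain|)$.

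The one place where a quadratic dependence on $|\chain|$ appears, and which I expect to be the main point to argue carefully, is in the reduction of the genuinely fuzzy constructors whose semantics involves the t-norm, residuum, or t-conorm: for instance the $\alpha$-cut of $C\sqcap D$ at degree $d$ must range over all pairs $(e,e')$ with $e\tnorm e'\ge d$, and similarly for implication in GCIs and for disjunction. For a fixed axiom and a fixed target degree $d$, the set of such pairs can have size $O(|\chain|)$, so producing the classical axioms for all $|\chain|$ target degrees of a single fuzzy axiom costs $O(|\chain|^2)$. I would therefore argue that each such axiom contributes $O(|\chain|^2)$ to $\Onto\crisp$, and that normalization guarantees only a \emph{constant} number of fuzzy constructor occurrences per axiom, so the total is $O(|\Onto|\cdot|\chain|^2)$, i.e.\ linear in $|\Onto|$ and quadratic in $|\chain|$, as claimed.

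Finally I would note that without normalization this argument breaks: a deeply nested concept would force the cut-table entries to be composed, and the number of pairs/tuples to enumerate multiplies across nesting levels, giving the exponential blow-up in $|\chain|$ that the paper set out to avoid. The main obstacle, then, is not any deep combinatorics but rather making precise the claim that each normalized axiom contains only boundedly many fuzzy occurrences and that $\kappa$ is indeed local, so that the per-axiom cost is $O(|\chain|^2)$ and the bounds multiply additively rather than multiplicatively across the ontology.
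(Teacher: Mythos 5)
Your proof follows the same route as the paper's: normalization bounds the number of constructor occurrences per axiom, $\kappa$ acts locally, the bookkeeping axioms $A_{>d}\sqsubseteq A_{\geqslant d}$ contribute only $O((|\NC|+|\NR|)\cdot|\chain|)$, and each normalized axiom costs $O(|\chain|^2)$, giving the stated bound. Two points deserve more care, though. First, the claim that the pair sets have size $O(|\chain|)$ is the crux of the paper's proof and you assert it without justification --- indeed, your phrase ``all pairs $(e,e')$ with $e\tnorm e'\geqslant d$'' describes a set of size $\Theta(|\chain|^2)$ in general (e.g.\ for the \godel t-norm with small~$d$); the linear bound holds only because ${\tnorm^-}(d)$, ${\tconorm^-}(d)$, and ${\implication^-}(d)$ are by definition restricted to \emph{minimal} (resp.\ extremal) pairs, which form an antichain in $\chain^2$ and hence contain at most one pair per first component. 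Second, your attribution of the two $|\chain|$ factors is slightly off: the reduction of a GCI $\axiom{C\sqsubseteq D\geqslant d}$ does not iterate over ``all $|\chain|$ target degrees'' --- $d$ is fixed by the axiom --- rather, one linear factor comes from the $O(|\chain|)$ classical GCIs indexed by ${\implication^-}(d)$, and the other from the $O(|\chain|)$ size of each $\rho(C,\geqslant d_1)$ produced by the ${\tnorm^-}$/${\tconorm^-}$ expansion inside it (and analogously for the nested ${\implication^-}(\next{d'})$ in complex role inclusions). Neither issue changes the conclusion, but both are exactly where the paper spends its effort.
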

This means that, by simply introducing the normalization step, we can avoid
the exponential blow-up of the crispification approach. In particular,
we greatly improve the exponential bounds shown
in~\cite{BDGS-IJUF12,bobillo2011reasoning}.

\begin{example}
\label{ex:red}
  Figure~\ref{fig:ex-9} contains the reduced form of the ontology from
  Example~\ref{ex: ontology} w.r.t.\ \lukas semantics over the chain with six
  elements $\mathscr{C}=\{0,0.2,0.4,0.6,0.8,1\}$. We have taken into account
  that one does not need to consider $\alpha$-cuts of classical concept and
  role names.
  This nicely illustrates how classical concepts and roles help to reduce the
  size of the reduction. Not only do we have $1$ crisp concept instead of
  $n-1$ cut concepts, but the number of disjunctions and conjunctions
  introduced can be reduced dramatically (cf.\ 
  Table~\ref{table: SROIQ reduction}).
\end{example}

\begin{figure*}[tb]
  \centering
  \begin{boxedminipage}[t]{\textwidth}
  \vspace*{-\baselineskip}
  \begin{align*}
    \A\crisp := \big\{ & \exfo{Server}(\exfo{serverA}),~ \exfo{CPU}(\exfo{cpuA}),~
    \exfo{Memory}(\exfo{memA}),~ 
    \exfo{Overused}_{\geqslant 0.8}(\exfo{cpuA}),~
    \exfo{Overused}_{\geqslant 1}(\exfo{memA}),~ \exfo{hasPart}(\exfo{serverA},\exfo{cpuA}), \\
    & \exfo{hasPart}(\exfo{serverA},~ \exfo{memA}),~ \nonumber
    \exfo{ServerWithAvailableResources}_{\geqslant 0.6}(\exfo{serverB}),~ 
    \exfo{isConnectedTo}_{\geqslant
      0.8}(\exfo{serverA},~ \exfo{serverB}) \big \}
    \\[\bigskipamount]
    \T\crisp := \{ & \exfo{Overused}_{\geqslant 0.4}\sqsubseteq
    \exfo{Overused}_{\geqslant 0.2}, \dots \} ~ \cup{} \\
    &\bigcup_{\substack{d_1,d_2\text{ minimal, }d_3\text{ maximal}\\
        \text{such that }(d_1\tnorm d_2)\implication d_3<0.8}}
    \hspace{-5ex}
    \big\{ \exfo{Server}\sqcap
    \exists\exfo{hasPart}.(\exfo{Overused_{\geqslant d_1}}
    \sqcap \exfo{CPU})\sqcap{} 
    \exists\exfo{hasPart}.(\exfo{Overused_{\geqslant d_2}}
    \sqcap \exfo{Memory}) \sqsubseteq \exfo{ServerWithLimitedResources_{>d_3}}
    \big\}
  \end{align*}
  \end{boxedminipage}
  \caption{ABox and TBox for Example~\ref{ex:red}}\label{fig:ex-9}
\end{figure*}

%\input{4.cqs}	
% !TEX root = ./JoDSPaper.tex

\section{Conjunctive Query Answering for Fuzzy DLs}
\label{sec: reduction based conjunctive query answering}

In this section we show how to solve the problem of answering threshold and
fuzzy CQs in finitely valued fuzzy DLs by taking advantage of existing
algorithms for answering unions of conjunctive queries in classical DLs.
Our solution is based on the reduction technique described in
Section~\ref{sec:reduction} .

\subsection{Translating Fuzzy and Threshold CQs}

In the following, we define a function $\kappa$ that maps each threshold CQ and
fuzzy CQ to a (U)CQ in a classical DL.
The idea is that we can then evaluate these classical queries
over~$\Onto\crisp$ in order to answer the original queries over~\Onto.
The shape of the mapping~$\kappa$ depends on the type of query, and uses
$\alpha$-cuts for reducing fuzzy concept and role names to classical ones
(cf.\ Section~\ref{sec:reduction}).

We first define the function $\kappa$ for degree atoms of queries analogously as it was
done for assertions:
\begin{align*}
  \kappa(A(x)\geqslant d) &:= A_{\geqslant d}(x), \\
  \kappa(r(x,y)\geqslant d) &:= r_{\geqslant d}(x,y).
\end{align*}
This definition is then lifted to threshold CQs~\qt in the obvious way:
$\kappa(\qt):=\{\kappa(\alpha) \mid \alpha\in\qt\}$.

For transforming fuzzy {\cq}s, we use $\alpha$-cuts as in the translation of 
concepts. Thus, in the 
case of fuzzy {\cq}s, $\kappa$ receives as input also a membership degree from \chainp, which will
be a lower bound for the degree of the query. Recall
%from Definition~\ref{def: Fuzzy Conjunctive Queries}
that the answers  depend on the operator \tnorm that interprets the
conjunction in the logic
under consideration. Since there can be several different combinations
of degrees for the different atoms that result in the same degree, fuzzy {\cq}s are translated into
unions of (classical) conjunctive queries, representing each of these
combinations.

Formally, for a fuzzy CQ $\qf=\{\alpha_1,\dots,\alpha_n\}$ and $d\in\chainp$,
$\kappa(\qf,\geqslant d)$ is the set of all conjunctive queries
$$\{\kappa(\alpha_1\geqslant d_1),\dots,\kappa(\alpha_n\geqslant d_n)\},$$ where
$d_1,\ldots,d_n\in\chain$ are such that $\bigotimes_{i=1}^n d_i\geqslant d$.

%The reduction process for a fuzzy \cq{} has two inputs, the first input is
%the query itself and the second input is the degree that we want to
%examine.
%	%
%In addition, for fuzzy \cq{}s the reduction process depends on the $t$-norm
%operator that has been adopted to provide semantics for conjunction.
%	%
%For the fuzzy \cq{} described in Definition~\ref{def: Fuzzy Conjunctive
%  Queries} the reduction process takes the form presented in
%equation~\ref{eq: fuzzy query reduction} when the fuzzy \cq{} refers to an
%\fg-\SROIQ{} ontology.
%	 %
%When the fuzzy \cq{} refers to an \fl[n]-\SROIQ{} ontology the
%corresponding reduction is the union of conjunctive queries presented
%in equation~\ref{eq: fuzzy lukas query reduction} ($\tnorm$ in
%equation~\ref{eq: fuzzy lukas query reduction} stands for the \lukas
%t-norm operator).
%%	
%\begin{gather}
%  % 
%  \label{eq: fuzzy query reduction}
%  \kappa\left(\bigwedge_{i=1}^\lambda P_i(\overline{X}_i), \geqslant d \right)=\bigwedge_{i=1}^\lambda \rho(P_i,\geqslant d)(\overline{X}_i)
%\end{gather}
%% 
%\begin{gather}
%  \label{eq: fuzzy lukas query reduction}
%  \kappa\left(\bigwedge_{i=1}^\lambda P_i(\overline{X}_i), \geqslant d \right)=\nonumber \\
%  \qquad\bigcup_{\tnorm_{i=1}^\lambda d_i=d\textnormal{ and }d_i\in\degrees} \left\{\bigwedge_{i=1}^\lambda \rho(P_i,\geqslant d_i)(\overline{X}_i)\right\}
%\end{gather}

\begin{example}
	Consider a threshold CQ asking for all pairs of connected servers such that the first one has limited 
	and the second one has available resources:
	\begin{align*}
		&\{\exfo{ServerWithLimitedResources}(x)\geqslant 0.8, \\
		&\qquad \exfo{isConnectedTo}(x,y)\geqslant 0.6, \\
		&\qquad \exfo{ServerWithAvailableResources}(y)\geqslant 0.6\}.
	\end{align*}
  This threshold \cq{} is reduced to the following classical \cq:
	\begin{align*}
		&\{\exfo{ServerWithLimitedResources}_{\geqslant 0.8}(x), \nonumber\\
		&\qquad \exfo{isConnectedTo}_{\geqslant 0.6}(x,y), \nonumber \\
		&\qquad \exfo{ServerWithAvailableResources}_{\geqslant 0.6}(y)\}.
	\end{align*}
	A fuzzy \cq asking for the same information, but without the thresholds, is the following:
	\begin{align*}
		&\{\exfo{ServerWithLimitedResources}(x), \\
		&\qquad\exfo{isConnectedTo}(x,y), \\
		&\qquad\exfo{ServerWithAvailableResources}(y)\}.
	\end{align*}
	To acquire all the pairs $(x,y)$ that satisfy this query with degree at
	least~$0.8$, for the \lukas t\mbox{-}norm over the chain with 6 membership
	degrees, the query is reduced to the following union of classical \cq{s}:
	\begin{align*}
		\big\{
		&\{
		\exfo{ServerWithLimitedResources}_{\geqslant 0.8}(x), \\
		&\qquad \exfo{isConnectedTo}_{\geqslant 1}(x,y), \\
		&\qquad \exfo{ServerWithAvailableResources}_{\geqslant 1}(y)
		\}, \nonumber\\
		&\{
		\exfo{ServerWithLimitedResources}_{\geqslant 1}(x), \\
		&\qquad \exfo{isConnectedTo}_{\geqslant 0.8}(x,y), \\
		&\qquad \exfo{ServerWithAvailableResources}_{\geqslant 1}(y)
		\}, 
		\nonumber\\
		&\{
		\exfo{ServerWithLimitedResources}_{\geqslant 1}(x), \\
		&\qquad \exfo{isConnectedTo}_{\geqslant 1}(x,y), \\
		&\qquad \exfo{ServerWithAvailableResources}_{\geqslant 0.8}(y)
		\}\big\}.
	\end{align*}
\end{example}
The following theorem states that our query reduction is sound and complete.

\begin{theorem}\label{theorem: correctness of query reduction}
  Let $\Onto\crisp$ be the classical version of the fuzzy ontology~$\Onto$,
  $\qt$ be a threshold $\cq$, $\qf$ be a fuzzy $\cq$, and $d\in\chain$.
  Then the following equivalences hold:
  \begin{enumerate}
  \item  $\Onto \models \qt~\Leftrightarrow~\Onto\crisp	\models \kappa(\qt)$
  \item $\Onto \models \qf\geqslant d~\Leftrightarrow~\Onto\crisp
    \models \kappa(\qf,\geqslant d)$.
  \end{enumerate}
\end{theorem}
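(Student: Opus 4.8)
The plan is to reduce both equivalences to a single per-model correspondence and then lift it to entailment by contraposition. First I would isolate the invariant that the construction behind Theorem~\ref{thm:crispification-correct} actually guarantees. Call a fuzzy interpretation~\I and a classical interpretation~\J on the same domain (with $a^\I=a^\J$ for all $a\in\NI$) \emph{cut-compatible} if, for every $e\in\Delta^\I$, every concept name~$A$, and every $d\in\chainp$, we have $e\in A_{\geqslant d}^\J$ iff $A^\I(e)\geqslant d$, and analogously for role names. The appendix proof of Theorem~\ref{thm:crispification-correct} builds, from each fuzzy model~\I of~\Onto, a cut-compatible classical model~$\I\crisp$ of~$\Onto\crisp$, and from each classical model~\J of~$\Onto\crisp$ a cut-compatible fuzzy model~$\J^f$ of~\Onto; here the cut-ordering axioms $A_{>d}\sqsubseteq A_{\geqslant d}$ force the cuts of~\J to be nested, so that setting $A^{\J^f}(e):=\max\{d\mid e\in A_{\geqslant d}^\J\}$ (and~$0$ if this set is empty) recovers exactly those cuts. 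I would extract this cut-compatibility as an explicit lemma from that proof.

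Second, I would establish the per-model correspondence for cut-compatible~\I,~\J and an arbitrary match~$\pi$. For threshold queries the definition $\kappa(A(x)\geqslant d)=A_{\geqslant d}(x)$ gives $\J\models^\pi\kappa(A(x)\geqslant d)$ iff $\pi(x)\in A_{\geqslant d}^\J$ iff $A^\I(\pi(x))\geqslant d$ iff $\I\models^\pi(A(x)\geqslant d)$, with the analogous equivalence for role atoms and a trivial one for equality atoms; conjoining over all degree atoms yields $\I\models^\pi\qt$ iff $\J\models^\pi\kappa(\qt)$. For fuzzy queries I would use monotonicity of~\tnorm in both directions. If $\I\models^\pi\qf\geqslant d$, then the values $d_i:=\alpha_i^\I(\pi)\in\chain$ satisfy $\bigotimes_i d_i\geqslant d$, so~$\pi$ matches the corresponding disjunct of $\kappa(\qf,\geqslant d)$ in~\J; conversely, if~$\pi$ matches a disjunct with degrees $d_1,\dots,d_n$ where $\bigotimes_i d_i\geqslant d$, then cut-compatibility gives $\alpha_i^\I(\pi)\geqslant d_i$ for all~$i$, and monotonicity yields $\bigotimes_i\alpha_i^\I(\pi)\geqslant\bigotimes_i d_i\geqslant d$. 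Finiteness of~\chain is essential here, as it is what makes $\kappa(\qf,\geqslant d)$ a finite union.

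Finally, I would lift to entailment by contraposition, using both directions of the model correspondence. For~(1): if $\Onto\not\models\qt$, some fuzzy model~\I of~\Onto satisfies $\I\not\models\qt$, and its cut-compatible crispification~$\I\crisp$ is a classical model of~$\Onto\crisp$ with $\I\crisp\not\models\kappa(\qt)$, so $\Onto\crisp\not\models\kappa(\qt)$; symmetrically, a classical countermodel~\J of $\kappa(\qt)$ over~$\Onto\crisp$ produces the cut-compatible fuzzy countermodel~$\J^f$ of~\qt over~\Onto. The argument for~(2) is identical with $\kappa(\qf,\geqslant d)$ in place of $\kappa(\qt)$. I expect the main obstacle to be not any individual calculation but pinning down the model correspondence precisely: in particular, verifying that~$\J^f$ is genuinely cut-compatible with~\J (which rests on the nestedness enforced by the cut-ordering axioms) and checking that crisp, two-valued concept and role names---for which the reduction introduces no cut predicates---are handled consistently with the cut-compatibility invariant.
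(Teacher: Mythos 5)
Your proposal is correct and follows essentially the same route as the paper: both directions are obtained from the two model constructions of the crispification proof (fuzzy model $\to$ cut interpretation and classical model $\to$ max-of-cuts fuzzy interpretation), combined with the per-match equivalence $\I\models^\pi\qt$ iff $\J\models^\pi\kappa(\qt)$ (and its \tnorm-monotonicity analogue for $\kappa(\qf,\geqslant d)$). The only difference is that you phrase the lifting to entailment contrapositively via countermodels, whereas the paper argues directly over all models; this is a trivial reformulation, and your explicit ``cut-compatibility'' invariant is exactly what the paper's Propositions in the appendix establish for concept and role names.
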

\begin{proof}
  In order to prove that $ \Onto\crisp \models \kappa(\qt)$ implies
  $\Onto\models \qt$, consider any fuzzy model $\I$ of $\Onto$. By
  Proposition~\ref{prop:normalization-correct}, \Imc can be extended to a model
  of the normalized ontology $\Omc'$. We now define the classical interpretation
  $\J=\{\Delta^\J,\cdot^\J\}$ as follows (cf.\ Appendix~\ref{app:completeness}):
  \begin{align}
    \Delta^{\J} & := \Delta^{\I} \nonumber \\
    a^\J & := a^\I \nonumber \\
    A_{\geqslant d}^{\J} & :=
      \left\{ \beta \mid A^\I(\beta )\geqslant d \right\} \nonumber \\
    r_{\geqslant d}^{\J} & :=
      \left\{ (\beta,\gamma) \mid r^\I(\beta ,\gamma)\geqslant d \right\}.
    \label{eq:I->J main document}
  \end{align}
      By Lemma~\ref{lem: completeness}, $\J$ is a classical model
      of~$\Onto\crisp$.
      Since $\J\models\Onto\crisp$ and $ \Onto\crisp \models \kappa(\qt)$, it
      follows that $\J\models \kappa(\qt)$.
      By the construction of~$\J$, it can be easily
      verified that $\I\models \qt$.

      It can be shown in a similar way that $\Onto \models \qf\geqslant d$
      implies $\Onto\crisp \models\kappa(\qf,\geqslant d)$.
	
  To prove the opposite direction we build for
  each classical model \J of $\Onto\crisp$, the fuzzy interpretation 
  $\I=(\Delta^\I,\cdot^\I)$, where
  \begin{align}
   \Delta^{\I} & := \Delta^{\J} \nonumber \\
   a^\I & := a^\J \nonumber \\
   A^\I(\beta) & :=
     \max\big\{ d \mid \beta\in A_{\geqslant d}^\J\big\} \nonumber \\
   r^{\I}(\beta,\gamma) & :=
     \max\big\{ d \mid (\beta,\gamma)\in r_{\geqslant d}^\J \big\}.
   \label{eq:J->I main document}
  \end{align}
  (cf.\ Appendix~\ref{app:soundness}). By
  Proposition~\ref{prop:normalization-correct} and Lemma~\ref{lem: soundness},
  \I is a fuzzy model of the original ontology~\Onto.
	It is straightforward to show that $\Jmc\models\kappa(\qt)$ whenever
	$\Imc\models\qt$, and similarly for fuzzy CQs.
	\qed
\end{proof}
It is easy to see that this result applies not only to the reduction described
in Section~\ref{sec:reduction}, but to any reduction that can be shown correct
using the definitions in~\eqref{eq:I->J main document}
and~\eqref{eq:J->I main document}, e.g.\ the one optimized for \chain-\SROIQ
under \godel and Zadeh semantics in~\cite{BDGS-IJUF12}.

\subsection{Complexity Results}
\label{sec:complexity}

By Lemma~\ref{lem:crispification-complexity}, the size of the crispified ontology
$\Onto\crisp$ is polynomial in the size of~\Onto. Therefore, we can transfer
all complexity results for answering classical CQs over classical sublogics of
\SROIN directly to the query answering problem for threshold CQs.
In particular, recall that CQ entailment in \SROIN can be decided in
\ThreeExpTime~\cite{CaEO-IJCAI09}, is \TwoExpTime-complete for \SHIN,
and \ExpTime-complete for \ALCHN~\cite{ELOS-IJCAI09,Lutz-IJCAR08}.

Moreover, for fuzzy DLs where correct crispification algorithms exist even for
\SROIQ, e.g.\ for \godel and Zadeh semantics~\cite{BDGS-IJUF12} or in the
presence of role disjunctions (see Section~\ref{subsec:number-restrictions}),
any classical CQ answering technique that is able to handle number restrictions
can be applied also for threshold CQs.
Under this condition, answering threshold CQs in finitely valued extensions of
\SHIQ and \SHOQ is \TwoExpTime-complete~\cite{ELOS-IJCAI09,GlHS-KR08}, for
\SROQ and \SRIQ it can be done in \ThreeExpTime~\cite{CaEO-IJCAI09}, while for
\SHQ it becomes \ExpTime-complete if queries are restricted to
\emph{simple} roles, i.e.\ roles that do not have transitive
subroles~\cite{Lutz-IJCAR08}.
%\stbo{mention that none of these algorithms has been implemented?; only for
%Horn logics implementations exist, but the reduction destroys the Horn
%property}
Notice, however, that none of these query answering approaches has been implemented
so far.

For fuzzy CQs, the complexity increases by an exponential factor due to the
blow-up in the translation $\kappa$. It is possible to eliminate this blow-up,
however, when the minimum t-norm is used, i.e.\ under \godel and Zadeh
semantics. Then we can obviously define
$\kappa(\qf,\geqslant d):=\{\kappa(\alpha\geqslant d) \mid \alpha\in\qf\}$
for any fuzzy CQ~\qf, and thus obtain the same complexity results as for
threshold CQs.

It should also be noted that the \emph{data complexity} of all these problems
is the same as that for classical CQs, as the size of the ABox is not increased
by the reduction from~\Onto to~$\Onto\crisp$. Since for many applications the TBox remains unchanged, while the ABox changes frequently, the reduction to the crisp ontology $\Onto\crisp$ need not be computed when queries are answered, but need be computed only once ``off-line'' beforehand.

\subsection{Generalizing the Query Component}
\label{sec:gen:query}

%\stbo{Integrate the following into the previous results? Scoring queries
%completely subsume fuzzy queries and the technique is the same.}
%\rpn{I like it separated: one first understands the simpler case, and then it is generalized. 
%Since we do not have space limits, we can do it this way}

So far, we have examined the reduction technique for answering threshold CQs 
and fuzzy CQs.
These two types of queries are immediate extensions of classical CQs, and
usually considered in the literature. 
Nevertheless, the existence of degrees may lead to more general forms of fuzzy 
CQs 
in which the score of a query is computed via a monotone scoring function, as 
described in 
the following.

\begin{definition}
  A \emph{($k$-ary) scoring query}~\qs is an expression of the form
  \[ (x_1,\dots,x_k) \gets f(\alpha_1,\dots,\alpha_n), \]
  where $f$ is a monotonically increasing \emph{scoring function} with $n$
  arguments, $\alpha_1,\dots,\alpha_n$ are atoms, and $x_1,\dots,x_k$ are
  variables.
Let $\I$ be an interpretation, $\qs$ a Boolean scoring query, and $\pi$ a
mapping as 
in Definition~\ref{def: Threshold Conjunctive Query}.
If
\[ f(\alpha_1^\I(\pi),\dots,\alpha_n^\I(\pi))\geqslant d, \]
then we write
$\I\models^{\pi}\qs\geqslant d$ and call $\pi$ a \emph{match} for $\I$ and
$\qs$ with a \emph{score} of at least~$d$.
We say that $\I$ \emph{satisfies} $\qs$ with a score of at least~$d$ and
write $\I\models \qs\geqslant d$ if there is such a match.
If $\I\models \qs\geqslant d$ for all models $\I$ of an ontology~\Onto, we
write $\Onto\models \qs\geqslant d$ and say that $\Onto$ \emph{entails}~$\qs$
with a score of at least~$d$.
Finally, a tuple $\ans\in\NI^k$ is an \emph{answer} to a $k$-ary
scoring query~\qs w.r.t.\ \Onto with a score of at least~$d$ if \Onto
entails~$\ans(\qs)$ with a score of at least~$d$.%
\footnote{Possibly new equality atoms $a\approx b$ introduced by the
instantiation can be connected with a multiplication to the score of the
original query (see Example~\ref{exa:cq-instantiation}).}
\end{definition}
It should be noted that the score may take an arbitrary  value in $\mathbb{R}$.
This kind of queries has already been considered in the
literature~\cite{pan2007expressive,straccia2013foundations,straccia2014top}.
Fuzzy CQs can be seen as special scoring queries of the form
$\alpha_1\tnorm\dots\tnorm\alpha_n$.
Since \chain is finite, the same technique as for fuzzy CQs can be applied
here, i.e.\ considering all possible combinations of degrees in $\chain$.

	\begin{example}
	\label{exa:agg:query}
		Suppose that we are interested in finding all servers that have overused
		CPU and memory,
		but the excessive use of CPU should be considered of greater importance 
		than the use of memory
		memory. To achieve this, we formulate the following query to include a
		weighting factor on the degrees of overuse for the different components.
		For instance, we can use the query				
		\begin{multline}\label{eq: aggregate query}
		 \exfo{Server}(x) \cdot \exfo{hasPart}(x,y) \cdot \exfo{CPU}(y) \cdot
		 \exfo{hasPart}(x,z) \cdot{} \\
		\qquad \exfo{Memory}(z) \cdot
		  \frac{3\cdot\exfo{Overused}(y)+2\cdot\exfo{Overused}(z)}{5}
		\end{multline}
	where the fraction in the last factor takes into account the degrees of
	overuse of CPU and memory with weights $0.6$ and $0.4$, respectively.

		Assume that $\chain=\{0,0.25,0.5,0.75,1\}$ and that the concepts
		$\exfo{Server}$, $\exfo{CPU}$, and $\exfo{Memory}$ and  the role
		$\exfo{hasPart}$ behave classically.
		If we want to find all answers that satisfy this query to degree
		at least $0.25$, then we can translate it into a
	    union of classical conjunctive queries that contains,~e.g.\ 
	\begin{alignat*}{1}
		  \{\exfo{Server}_{\geqslant 1}(x), &\exfo{hasPart}_{\geqslant 1}(x,y), \exfo{CPU}_{\geqslant 1}(y), \\
		  & \exfo{hasPart}_{\geqslant 1}(x,z),  \exfo{Memory}_{\geqslant 1}(z), \\
		  & \exfo{Overused}_{\geqslant 0.25}(y), \exfo{Overused}_{\geqslant 0.75}(z) \}.
	\end{alignat*}
	When evaluated over the reduced ontology, this query returns all triples of
	elements from 
	$\exfo{Server},\exfo{CPU}$, and $\exfo{Memory}$, respectively, where the CPU
	is overused to degree
	$0.25$ and the memory is overused to degree $0.75$. These tuples satisfy the
	query~\eqref{eq: aggregate query} 
	with a degree of at least
	\[ 0.6\cdot 0.25+0.4\cdot 0.75=0.45\geqslant 0.25, \]
	as desired.
	\end{example}
	Another interesting problem, specific to weighted logics, is the top-$k$
	query answering problem presented
	in~\cite{straccia2006towards,straccia2013foundations,straccia2014top}.
	This  variation of the fuzzy query answering problem focuses on the $k$ answers with the highest degrees of satisfaction.
	 In a naive approach to solve this problem, the translation function~$\kappa$
	 for fuzzy or scoring \cq{}s can be iteratively applied starting from the 
	 highest to
	 the lowest degrees in $\chain$ until the limit of $k$ answers is reached.
	 It has to be investigated if a more sophisticated approach can be adopted to solve this problem.

% !TEX root = ./JoDSPaper.tex

\section{Practical Evaluation}
\label{sec:results}

	%What we are doing
	We have proved that, by introducing the normalization step, we can avoid the exponential blow-up of the 
	earlier crispification approach (Section~\ref{sec:reduction}). 
	We have additionally extended the reduction process to handle the problem of CQ answering 
	(Section~\ref{sec: reduction based conjunctive query answering}).
	The main objectives of this section are to
	\begin{enumerate}
		\item evaluate how the ontology normalization preprocessing improves the total execution time for CQ 
			answering, and
	    \item study the practical limitations of the reduction approach for the problem of CQ answering in 
			the finitely valued fuzzy setting.
	\end{enumerate}

	%Implementation
	Rather than implementing a full fuzzy CQ answering system from scratch, we have modified the 
	existing reasoner \delorean%
\footnote{\url{http://webdiis.unizar.es/~fbobillo/delorean}}\cite{BoDG-ESA12}.
%	\ayt{Which version of Delorean do we use?}
%\stbo{DeLorean has no concept of ``versions''.}
	%
	\delorean is a reduction-based fuzzy DL reasoner that supports fuzzy 
	variants of the description logics 
	$\mathcal{SROIQ}(\mathcal{D})$ and 
	$\mathcal{SHOIN}(\mathcal{D})$ under the finite \godel, finite \lukas, and Zadeh semantics.
	We use \delorean to transform an input fuzzy ontology into a crisp ontology through 
	the transformation
	rules described earlier in this paper.
  Additionally, we have implemented the reduction for threshold queries described in
  Section~\ref{sec: reduction based conjunctive query answering}.
	After the crisp ontology and query are obtained, an arbitrary crisp 
	reasoner can be employed for query answering. For our experiments
  we used the \pagoda system\footnote{\pagoda version from 23rd of April 2015
  commit id: \url{30b5afef93} in
  \url{https://github.com/yujiaoz/PAGOdA}}~\cite{ZCGNH-DL-15,ZNCH-AAAI14}.
%
%  See~\cite{Bobi-TFS15} for an evaluation of the performance of DeLorean on
%  classical reasoning problems in tractable DLs.
%\stbo{Keep this citation only if it has been published until the deadline.}

	Other reasoners that support query answering for expressive DLs could also
	have been adopted.
	However, most other systems can only correctly answer CQs that can be
	directly expressed in the DL, e.g., tree-shaped CQs or instance queries, or
	employ the simplifying restriction that the non-distinguished variables can
	only be bound to named individuals \cite{MNOW-KI-13}.
	Although in theory \pagoda also depends on the query answering capabilities
	of \hermit~\cite{GHM+-JAR14}, it was shown in~\cite{ZCGNH-DL-15} that it
	correctly answers the standard queries of the LUBM benchmark (which we use in
	the following), without having to rely on \hermit.
  This is also the case for the normalized, fuzzified, and crispified versions
  of the LUBM TBox we use in the following experiments.

\subsection{Test Data and Test Set-up}
	
	%LUBM Ontology
	In order to evaluate our approach, we used the LUBM%
\footnote{\url{http://swat.cse.lehigh.edu/projects/lubm/}}
  ontology benchmark~\cite{guo2005lubm}.
	The LUBM benchmark contains  terminological knowledge that describes the 
	domain of a university, and generates synthetic OWL data (ABox) over the 
	ontology specifying individuals belonging to the university.
	The ABox corresponding to a single university contains approximately $1300$ 
	concept assertions and 
	$2450$ role assertions and can be scaled by a factor of $k$, producing 
	information for $k$ different 
	universities.
  In our experiments, we used different ABoxes covering~$1$, $15$, and $30$
  universities, respectively.

	To use LUBM as a benchmark for fuzzy reasoning, we have extended all ABox
	axioms with random degrees chosen from a fixed finite chain of cardinality
	$3$, $7$, or~$11$, which are interpreted using \lukas semantics. The axioms
	in the TBox are always required to hold with degree~$1$.
	Note that the original LUBM TBox is formulated in $\mathcal{ELHI}_{R^+}$ (we
	ignored all datatype axioms), and hence the same holds for our fuzzified
	variants. However, due to the introduction of disjunctions in the reduction
	to a classical ontology, the resulting TBox is formulated in $\mathcal{SHI}$
	(see Table~\ref{table: SROIQ reduction}).
% and also~\cite{Bobi-TFS15}).

  The queries of the LUBM benchmark were translated to threshold queries by
  asking each atom to hold with the smallest non-zero degree ($\frac{1}{2}$,
  $\frac{1}{6}$, and $\frac{1}{10}$ for $3$, $7$, and $11$ truth degrees,
  respectively).

  For all experiments, we report the average runtime over $5$ separate runs;
  the coefficient of variation was mostly below 25\%, which means that the runtimes did not
  differ much between runs.
%\stbo{More explanation about this here?}
%
  All the tests were performed on a machine powered by an Intel Core i7 2.6 GHz
  processor and equipped with 8 GB DDR3 1600MHz main memory.

%Optimization Evaluation
\subsection{Evaluating the Normalization and its Effect on Query Answering}
\label{sec:eval-norm}

	For the first experiment, we applied the normalization from
	Section~\ref{sec:normalization} to the LUBM TBox, and compared the query
	answering times for a single challenging threshold CQ (based on query \#9 of
	the LUBM benchmark).
%	\rpn{why this CQ? was a random choice?}
	%
	The results of this evaluation, based on the \lukas semantics, are presented 
	in Table~\ref{tab:norm:eval}.
\def\tableheading#1{\rotatebox{75}{\begin{minipage}[t]{0.12\textwidth}#1\end{minipage}}}%
\begin{table}
\caption{Time needed for preprocessing by \pagoda (in s) and query execution
time (in ms) for instance queries over the original LUBM TBox vs.\ its
normalized version}
\label{tab:norm:eval}
\centering
\resizebox{\columnwidth}{!}{
\begin{tabular}{rrr||rr|rr|rr|}
	%\hline
 &&& \multicolumn{6}{c|}{$\sharp$ Universities}
	\\
	&&& 
	\multicolumn{2}{c|}{1} & 
	\multicolumn{2}{c|}{15} & 
	\multicolumn{2}{c|}{30} 
	\\ \cline{4-9} %toprule
	\multicolumn{1}{@{\hspace{2mm}}r@{\hspace{-2mm}}}{\tableheading{$\sharp$ 
		Degrees}} & 
  \multicolumn{1}{@{\hspace{2mm}}r@{\hspace{-2mm}}}{\tableheading{Normalized}}&
  \multicolumn{1}{@{\hspace{2mm}}r@{\hspace{-1mm}}||}{\tableheading{TBox size}}
	  & \tableheading{Preprocessing\\ \mbox{}~~(in s)}\hspace*{-3mm}
	  & \tableheading{Querying (in ms)}\hspace*{-1mm}
	  & \tableheading{Preprocessing\\ \mbox{}~~(in s)}\hspace*{-3mm}
	  & \tableheading{Querying (in ms)}\hspace*{-1mm}
	  & \tableheading{Preprocessing\\ \mbox{}~~(in s)}\hspace*{-3mm}
	  & \tableheading{Querying (in ms)}\hspace*{-1mm} \\
	\hline\hline
	3 & no & 493 & 8.94 & 2.8 & 9.24 & 9.8 & 9.52 & 18.8 \\
	\rowcolor{lightgray}
	3 & yes & 518 & 8.92 & 2.4 & 9.21 & 9.4 & 9.52 & 16.6 \\
	\hline
	7 & no & 2187 & 13.09 & 2.4 & 13.58 & 5.0 & 14.25 & 7.2 \\
	\rowcolor{lightgray}
	7 & yes & 2022 & 12.21 & 2.2 & 12.57 & 3.8 & 13.06 & 6.4 \\
	\hline
	11 & no & 5177 & 26.62 & 2.0 & 27.36 & 3.2 & 28.45 & 5.0 \\
	\rowcolor{lightgray}
	11 & yes & 3942 & 18.69 & 2.2 & 19.60 & 3.6 & 20.50 & 5.6
 \tabularnewline
\hline
\end{tabular}
}
\end{table}
The table compares the sizes of the crispified TBoxes for both the original
TBox and the normalized one (counted as the number of occurrences of concept
and role names), as well as the running times of \pagoda.
It shows both the time needed by \pagoda for its own preprocessing of the
ontology and the time needed to perform query answering.
The preprocessing clearly dominates the overall runtime. In comparison to this,
the time needed by \pagoda to actually answer the queries is almost negligible.

The preprocessing time is mostly determined by the normalization and by the
number of degrees, i.e., the size of \chain, but it also increases slightly
with increased ABox size.
One can also see that, although the query answering times are not much
affected, the size of the TBox, and hence the time \pagoda needs for
preprocessing, can be reduced significantly by normalization.
This can also be seen in the graph in Figure~\ref{fig:tbox-sizes}, where the
normalized variant of the TBox is clearly  smaller  w.r.t.\ 
increasing numbers of degrees than the original TBox.

%\stbo{I was tempted to say that the non-normalized TBox shows an exponential
%behavior in this graph, but then I recalled that the exponential blowup was in
%the nesting depth of concepts (which is constant here), and not in the number
%of degrees (they influence the size only polynomially). I don't know why there
%seems to be a bend in the graph. Anyway, this small sample is not enough to
%make these kinds of claims.}
%
However, for only~$3$ degrees of membership, the avoided exponential blow-up does not
outweigh the overhead of introducing auxiliary concept names for the
normalization in practice.
\begin{figure}[tb]
  \centering
  \def\svgwidth{\columnwidth}
  \input{TBox_size.txt}
  \caption{The sizes of the crispified TBoxes (the number of occurrences of
    concept and role names) compared between the original LUBM TBox and its
    normalized variant}
  \label{fig:tbox-sizes}
\end{figure}

Finally, it should be noted that the query answering time actually \emph{decreases}
when the number of truth degrees is increased (for both the normalized and the
original TBox). This is not relevant for our experiments, but is simply a
consequence of the \lukas semantics:
When more fine-grained truth degrees are randomly assigned to all ABox
assertions, the chance is larger that the degree to which a query atom is
satisfied is reduced to~$0$, and hence does not yield an answer tuple (see
Table~\ref{table: fuzzy logic operators}). And with a decreased number of
answers, the queries need less time to be answered.
%\stbo{Is this the right explanation? I am not sure anymore.}
%\rpn{sounds correct to me}

\subsection{Evaluating Conjunctive Query Answering }
	In order to analyze the practical usefulness of the reduction approach for CQ
	answering, we consider several CQs from the LUBM benchmark (namely \#5, \#7,
	and \#9) over the normalized LUBM TBox.
	An additional parameter that is evaluated here is the percentage of crisp
	concepts and roles appearing in the ontology; 0\%, 20\%, 80\%, or 100\% of
	all names are randomly chosen to be crisp.
	Recall that a crisp concept or role name can only take the values $0$ or $1$.
	For such concepts and roles, the reduction process is more light-weight, 
	since it results in a smaller number of concept names in the reduced 
	ontology~\cite{BoDG-ESA12}. %,Bobi-TFS15}.
	For each chain size, number of universities, and percentage of crisp symbols,
	we evaluated three \emph{threshold CQs}%
\footnote{Note that for answering fuzzy CQs instead of threshold CQs, the
runtime will increase by an exponential factor, depending on the number of
atoms in the fuzzy CQ.}
	that extend the CQs in the LUBM benchmark.
	The evaluation was performed again w.r.t. the \lukas semantics. The
	results are depicted in Table~\ref{tab: query Lukas}.
\definecolor{lightgray}{rgb}{0.89, 0.89, 0.89}
\definecolor{darkgray}{rgb}{0.75,0.75,0.75}
\def\queries{4}
\begin{table*}
\caption{Time needed for preprocessing by \pagoda (in s) and query
execution time (in ms) for three queries and different percentages of crisp
names}
\label{tab: query Lukas}
\centering
%\resizebox{1\textwidth}{!}{
\begin{tabular}{rrr||r@{\hspace{2.2mm}}rrr|r@{\hspace{2.2mm}}rrr|r@{\hspace{2.2mm}}rrr|}
	%\hline
 &&& \multicolumn{12}{c|}{$\sharp$ Universities}
	\\
	&&&
	\multicolumn{\queries}{c|}{1} & 
	\multicolumn{\queries}{c|}{15} & 
	\multicolumn{\queries}{c|}{30}
	\\  \cline{4-15}
	\multicolumn{1}{@{\hspace{2mm}}r@{\hspace{-2mm}}}{\tableheading{$\sharp$ 
	Degrees}}&
	\multicolumn{1}{@{\hspace{2mm}}r@{\hspace{-2mm}}}{\tableheading{Crisp 
	$\%$}}&
	\multicolumn{1}{@{\hspace{2mm}}r@{\hspace{-1mm}}||}{\tableheading{TBox 
	size}}&
	\tableheading{\parbox{2cm}{Preprocessing\\ \mbox{}~~(in s)}}\hspace*{-4mm}&
	\hspace*{1mm}\tableheading{Query 5 (in ms)}\hspace*{-4mm}&
	\hspace*{1mm}\tableheading{Query 7}\hspace*{-4mm}&
	\hspace*{1mm}\tableheading{Query 9}\hspace*{-4mm}&
	% %
	\tableheading{\parbox{2cm}{Preprocessing\\ \mbox{}~~(in s)}}\hspace*{-4mm}&
	\hspace*{1mm}\tableheading{Query 5 (in ms)}\hspace*{-4mm}&
	\hspace*{1mm}\tableheading{Query 7}\hspace*{-4mm}&
	\hspace*{1mm}\tableheading{Query 9}\hspace*{-4mm}&
	% %
	\tableheading{\parbox{2cm}{Preprocessing\\ \mbox{}~~(in s)}}\hspace*{-4mm}&
	\hspace*{1mm}\tableheading{Query 5 (in ms)}\hspace*{-4mm}&
	\hspace*{1mm}\tableheading{Query 7}\hspace*{-4mm}&
	\hspace*{1mm}\tableheading{Query 9}\hspace*{-4mm}
	\\ \hline\hline
	%%%%%%
%\multicolumn{2}{c||}{crisp case}
- & 100 & 207 &
8.36&     7.2&     10.2&     3.2&     
8.52&     23.6&     112.4&     24.8&     
8.64&     32.6&     172.8&     44.4
\\\hline
3 & 0 & 518 &
8.94&     4.8&     6.8&     3.0&     
9.36&     18.2&     73.6&     9.6&     
9.76&     34.6&    102.2&     18.2
\\
\rowcolor{lightgray}
3 & 20 & 412 &
8.77&     4.6&     8.0&     2.2&     
9.13&      18.6&     87.2&     9.2&     
9.32&      29.6&     139.6&     22.2
\\
\rowcolor{darkgray}
3 & 80 & 267 &
9.10&     8.8&     11.2&     3.2&     
8.78&     26.6&     130.2&     23.2&     
9.16&     37.4&     173.6&     39.4
\\\hline
7 & 0 & 2022 &
11.88 &     5.4&     3.0&     2.4&     
12.49&    17.2&     14.2&     3.6&     
13.05&     21.8&     29.0&     6.4
\\\rowcolor{lightgray}
7 & 20 & 1412 &
10.36&     3.8&     5.8&     2.4&     
11.03&     14.6&     50.2&     4.2&     
11.56&     17.4&     91.2&     7.6
\\\rowcolor{darkgray}
7 & 80 & 519 &
8.78&     8.0&     10.8&     2.8&     
9.12&     23.8&     132.0&     26.8&     
9.49&     36.8&     162.6&     42.4
\\\hline
11 & 0 & 3942 &
18.63&     2.2&     3.4&     2.6&     
19.52&     10.6&     11.4&     3.6&     
20.17&     13.2&     17.6&     4.4
\\\rowcolor{lightgray}
11 & 20 & 2700 &
13.38&     2.8&     4.6&    2.8&     
14.20&     13.4&     40.6&     3.4&     
14.71&     14.8&     60.2&     5.0
\\\rowcolor{darkgray}
11 & 80 & 803 &
9.22&     8.0&     10.6&     3.0&     
9.47&     27.0&     113.4&     23.2&     
9.70&     40.6&     169.2&     40.8
\\
	%%%%%%		
	%
\hline
\end{tabular}
%}
\end{table*}

%\stbo{My interpretation here:}
We again see a counter-intuitive behavior, namely that the query answering time
decreases with smaller percentage of crisp elements (and hence increased TBox
size). The reason lies again with the \lukas semantics: queries over ``more
fuzzy'' TBoxes will have less answers, and hence less effort is required to
answer them.
Nevertheless, we can see that the times required to answer threshold CQs are of
the same order of magnitude as for classical CQs. Again the total runtime is
dominated by the preprocessing time of \pagoda, and hence mainly by the TBox
size. This is good news, since the preprocessing step only needs to be performed once as long as the input TBox does not change.

Moreover, our approach is scalable for large ABoxes (as much as this is the
case for a crisp ontology) since the crispification mainly affects the size of
the TBox, but there is nearly no difference in size between a fuzzy ABox and
its crispified version (see also Table~\ref{table: SROIQ reduction}).

As a final note, in practical applications of fuzzy ontologies one would
hardly expect
the ontology to have only fuzzy concept and role names, but rather that a
smaller number of fuzzy names complements a large number of crisp names that
describe precise knowledge. This is closer to our 80\% scenario, which does not
differ much from the purely crisp case in terms of runtime.

% !TEX root = ./JoDSPaper.tex

\section{Related Work}\label{sec: related work}

	Non-fuzzy representations of fuzzy DLs have been extensively studied for
	several families of languages.
	The methods can be classified based on their fuzzy and DL expressivity used.
	Based on the Zadeh family of fuzzy logic operators, reduction techniques and  optimizations have been 
	examined for the fuzzy extensions of the $\mathcal{ALCH}$~\cite{straccia2004transforming}, 
	$\mathcal{SHOIN}$~\cite{StSt-OWLED07}, and $\mathcal{SROIQ}$~\cite{bobillo2008optimizing} languages, 
	while an experimental evaluation of the reduction technique for the fuzzy version of  $\mathcal{SHIN}$ 
	is presented in~\cite{cimiano2008reasoning}.  
	For \godel semantics, a reduction procedure for the DL \SROIQ  is
	considered in~\cite{bobillo2009fuzzy}. This procedure is extended in~\cite{BDGS-IJUF12} for a 
	language combining  \godel and Zadeh semantics.
	The reduction technique for the \lukas t-norm over \SROIQ is studied
	in~\cite{bobillo2011reasoning}, and for arbitrary finite t-norms
	in~\cite{BDGS-IJUF12,bobillo2013finite}. It should be noted, as we show in
	Example~\ref{exa:number-restrictions-incorrect}, that the
	reduction of qualified number restrictions proposed 
	in~\cite{BDGS-IJUF12,bobillo2011reasoning,MaPe-RR-14} is incorrect.
This family of algorithms has been implemented in the DeLorean
reasoner~\cite{BoDG-13,BoDG-ESA12,BoDS-TFS13}.
	Based on a different approach, a family of fuzzy DLs using $\alpha$-cuts as atomic concepts and 
	roles is considered in~\cite{li2005family}.

Conjunctive query answering for fuzzy DLs has mostly been studied for
the \textit{DL-Lite} family of languages.
In~\cite{straccia2006answering,straccia2006towards} the problem of
evaluating \mbox{top-$k$} queries in fuzzy \textit{DL-Lite} is considered.
In~\cite{pan2007expressive}, the authors present a variety of query languages
that can be used for querying fuzzy \textit{DL-Lite} ontologies and adapt
classical query rewriting techniques for answering these queries.
A similar approach is taken in~\cite{MaTu-JIST-14}.

%Tableaux-based approaches for conjunctive query answering have also
%been studied.
%	%
A tableaux algorithm for conjunctive query answering for %the language
fuzzy CARIN, a language combining the DL
$\mathcal{ALCNR}$ with Horn rules under Zadeh semantics, is provided
in~\cite{mailis2010expressive}.
Another algorithm for answering expressive fuzzy conjunctive queries is
presented in~\cite{cheng2009deciding,cheng2009alcn}.
The algorithm allows the occurrence of both lower bound and the upper bound of
thresholds in a query atom over the DL \mbox{$\mathcal{SHIN}$} extended with
the Zadeh semantics.
Finally, practical approaches for storing and querying fuzzy knowledge
in the Semantic Web have been also investigated~\cite{simou2008storing}.

% !TEX root = ./JoDSPaper.tex

\section{Conclusions and Future Work}\label{sec: conlusions}

	This paper focuses on how a classical representation of ontologies written in finitely valued
	fuzzy DLs can be adopted in order to solve the threshold and fuzzy conjunctive query answering problems.
	These problems are reduced to equivalent (U)CQ answering problems in
	classical DLs.
	The correctness of the suggested technique is proved and its complexity is
	studied for different variants of \chain-\SROIQ.
	As far as we know, no similar theoretical results have been presented.
	The proofs rely on the fact that each model of a fuzzy ontology $\Onto$ can be mapped to a model of 
	its reduced crisp form $\Onto\crisp$ and vice versa, thus showing the
	soundness and 
	completeness of the reduction technique.

	To verify the correctness of our approach, we have corrected and extended the
	proofs sketched in~\cite{bobillo2011reasoning}, providing the first full
	proof for this result in the literature.
	Additionally, we have improved  the size of the reduction from fuzzy \SROIN 
	ontologies to classical 
	\SROIN ontologies by introducing a linear normalization step. After this 
	step, the obtained classical ontology
	can be exponentially smaller than the one obtained by previous crispification approaches. The smaller size promises lower runtimes of the query answering procedure.
	%
%	Therefore this paper can be considered complementary to the existing 
%literature on non fuzzy 
%	representation of fuzzy DLs.
%
	We have also suggested a reduction of qualified number restrictions, which
	uses role disjunctions to express a partitioning of a role.

  In our evaluation of a prototype implementation of this approach, based on
  the DeLorean system, we have demonstrated that the overhead involved in using
  a finitely valued fuzzzy description logic instead of a classical DL is
  reasonable.
  In a scenario where only the data and queries change frequently, while the
  TBox remains fixed, the normalization and the reduction of a finitely valued
  TBox can be computed ``off-line''. It is thus sufficient to consider the time
  required for query answering, which does not increase significantly when
  using more than two membership degrees.

Future work involves examining if available optimizations techniques for fuzzy 
and classical DLs can be applied to improve the performance of these 
algorithms.
On the theoretical side, we will investigate the possibility of extending the 
reduction to deal with qualified number restrictions without resorting to
non-standard role constructors, obtaining a polynomial reduction for any
finitely valued fuzzy description logic extending OWL~2~DL.

\section*{Acknowledgements}

This work was partially supported by the 
	German Research Foundation (DFG) under the research grant BA 1122/17-1 (FuzzyDL), 
	the Collaborative Research Center 912 ``Highly Adaptive Energy-Efficient Computing,'' and the Cluster of Excellence ``Center for Advancing Electronics Dresden'';
	it was developed while R.\ Pe\~naloza was affiliated with TU Dresden and the Center for Advancing Electronics Dresden, Germany.

We also want to thank Fernando Bobillo for providing us with a binary of the 
DeLorean system, and the anonymous 
reviewers for their valuable comments on earlier drafts of this paper.

\section*{Appendix}

% !TEX root = ./JoDSPaper.tex

%%%%%%%%%%%%%%  A P P E N D I X  %%%%%%%%%%%%%%%%%%%%%%%%%%%%%%%

\appendix

\section{Proof of Theorem~\ref{thm:crispification-correct}}
\label{app:reduction-correct}

{
\renewcommand{\arraystretch}{1.2}

\begin{table*}	
\centering
\caption{Mapping of concepts, roles, and axioms to classical \SROIN}
\label{table: SROIQ reduction}\label{table:reduction}
\begin{tabular}{rl}
\toprule
	%%%%%		Top - Bottom
	$\rho(\top,\geqslant d)$	& $\top$ \tabularnewline 
	$\rho(\bot,\geqslant d)$	& $\bot$ \tabularnewline
	%\hline
	%%%%%		Concept Name
	$\rho(A,\geqslant d)$	&		$A_{\geqslant d}$		\tabularnewline
	%
	%%%%%		Concept Negation
	$\rho(\neg C,\geqslant d)$		&	
	$\neg\rho(C,> \negation^-(d))$
	\tabularnewline[1mm]
	%
	%\hline
	%%%%%		Concept Conjunction
	$\rho(C \sqcap D,\geqslant d)$						&
	$\bigsqcup_{(d_1,d_2)\in{\tnorm^-}(d)}\big(\rho(C,\geqslant d_1)\sqcap 
	\rho(D,\geqslant d_2)\big)$ 	 
	\tabularnewline[3mm]
	%%%%%		Concept Disjunction
	$\rho(C \sqcup D,\geqslant d)$						&
	$\bigsqcup_{(d_1,d_2)\in{\tconorm^-}(d)}\big(\rho(C,\geqslant d_1)\sqcap 
	\rho(D,\geqslant d_2)\big)$ 
	\tabularnewline[3mm]
	%
	%%%%%		Existential Restrictions
	$\rho(\exists r.C,\geqslant d)$						&
	$\bigsqcup_{(d_1,d_2)\in{\tnorm^-}(d)}
	  \exists \rho(r,\geqslant d_1).\rho(C,\geqslant d_2)$
	\tabularnewline[3mm]
	%%%%%		Universal Restrictions
	$\rho(\forall r.C,\geqslant d)$ &
	$\bigsqcap_{(d_1,d_2)\in{\implication^-}(d)}
	  \forall\rho(r,\geqslant d_1).\rho(C,> d_2)$
	\tabularnewline[2mm]
	%	
	%%%%%		Nominals
	$\rho(\{d_1/o_1,\ldots, d_m/o_m\},\geqslant d)$ 	& 
	$\left\{o_i\mid d_i\geqslant d , i\in\{1,\dots,m\}\right\}$
\tabularnewline
  $\rho(\uatLeast{m}{r},\geqslant d)$ &
  $\uatLeast{m}{\rho(r,\geqslant d)}$
\tabularnewline
  $\rho(\uatMost{m}{r},\geqslant d)$ &
  $\uatMost{m}{\rho(r,>\negation^-(d))}$
\tabularnewline
	%
	%%%%%		Local Reflexivity
	%
	$\rho(\exists r. \text{Self},\geqslant d)$					&	
	$\exists \rho(r,\geqslant d).\text{Self}$
\tabularnewline%\hline
	%%%%%		Roles
	$\rho(r,\geqslant d)$		& 
	$r_{\geqslant d}$
\tabularnewline
	%
	%%%%%		Inverse Roles
	%
	$\rho(r^{-},\geqslant d)$ 		&
	$r_{\geqslant d}^{-}$
	\tabularnewline
	%
	%%%%%		Universal Role
	$\rho(u,\geqslant d)$ 			& 
	$u$
\tabularnewline
\midrule

	$\kappa(C( a ) \geqslant d)$		&
	$\rho(C,\geqslant d)(a)$\tabularnewline
	$\kappa(C( a ) \leqslant d)$		&
	$\lnot\rho(C,> d)(a)$\tabularnewline
	$\kappa(r(a,b) \geqslant d)$	&
	$\rho(r,\geqslant d)(a,b)$\tabularnewline
	$\kappa(r(a,b) \leqslant d)$	&
	$\lnot\rho(r,> d)(a,b)$\tabularnewline
	$\kappa(a\neq b)$				&
	$a\neq b$						
	\tabularnewline%\hline
	$\kappa(a = b)$				&
	$a = b$						
	\tabularnewline
	$\kappa(\langle C\sqsubseteq D \geqslant d\rangle)$	&	
	$\displaystyle \bigcup_{(d_1,d_2)\in{\implication^-}(d)}
	  \big\{\rho(C,\geqslant d_1)\sqsubseteq \rho(D,>d_2)\big\}$
	\tabularnewline
	$\kappa(\langle r_1 r_2 \sqsubseteq r \geqslant d\rangle)$	&
	
	$\displaystyle \bigcup_{(d_1,d')\in{\implication^-}(d),\ 
	  (d_2,d_3)\in\implication^-(\next{d'})}
	  \big\{\rho(r_1,\geqslant d_1) \rho(r_2,\geqslant d_2)
	    \sqsubseteq \rho(r,> d_3)\big\}$
	\tabularnewline
	$\kappa(\axiom{r_1\sqsubseteq r_2\geqslant d})$ &
	$\kappa(\axiom{r_1u\sqsubseteq r_2\geqslant d})$
	\tabularnewline
	$\kappa(\text{trans}(r))$ &
	$\kappa(rr\sqsubseteq r)$
	\tabularnewline
	$\kappa(\text{dis}(r_1,r_2))$				&
	$\text{dis}(\rho(r_1,>0),\rho(r_2,>0))$
	\tabularnewline
	$\kappa(\text{ref}(r))$				&
	$\text{ref}(\rho(r,\geqslant 1))$
	\tabularnewline
	$\kappa(\text{irr}(r))$		&
	$\text{irr}(\rho(r,> 0))$
	\tabularnewline
	$\kappa(\text{sym}(r))$ &
	$\kappa(r\sqsubseteq r^-)$
	\tabularnewline
	$\kappa(\text{asy}(r))$	&	
	$\text{asy}(\rho(r,> 0))$
	\tabularnewline
\bottomrule
\tabularnewline
\end{tabular}
\end{table*}
}

Table~\ref{table: SROIQ reduction} depicts the reduction rules for transforming
a \chain-\SROIN ontology~\Onto into a classical \SROIN ontology $\Onto\crisp$.
In this table, we use the notation
\[ {\negation^-}(d) := \max\{d'\in\chain \mid \negation d'\geqslant d\}. \]
Likewise, we define ${\tnorm^-}(d)$ as the set of all pairs
$(d_1,d_2)\in\chain^2$ that satisfy $d_1\tnorm d_2\geqslant d$ and are minimal
w.r.t.\ the component-wise ordering on $\chain^2$. This means that all elements
of ${\tnorm^-}(d)$ are incomparable, i.e.\ for all
$(d_1,d_2),(d_1',d_2')\in{\tnorm^-}(d)$ we have either $d_1>d_1'$ and
$d_2<d_2'$ or vice versa.
The set ${\tconorm^-}(d)$ is defined analogously.
For the implication, we need a slightly different definition, characterizing
all pairs of elements whose implication does \emph{not} exceed a specified
value~$d$. More precisely, we define ${\implication^-}(d)$ as the set of all
$(d_1,d_2)\in\chain^2$ satisfying $d_1\implication d_2<d$, and minimize here
w.r.t.\ the first component and maximize w.r.t.\ the second component since
\implication is antitone in the first argument and monotone in the second
argument.

Note that all expressions of the form $>d$ in
Table~\ref{table: SROIQ reduction} are well-defined since we have $d<1$ in all
such cases.
In particular, it holds that ${\negation^-}(d)<1$ whenever $d>0$, and $d_2<1$
for all $(d_1,d_2)\in{\implication^-}(d)$.

We can now prove Theorem~\ref{thm:crispification-correct}. In the following, let
\Onto be an arbitrary (not necessarily normalized) ontology in \chain-\SROIN, 
and
$\Onto\crisp$ be its reduced form according to
Table~\ref{table: SROIQ reduction}.

\subsection{Soundness}
\label{app:soundness}

Consider first a classical model \J of~$\Onto\crisp$. We
construct a fuzzy interpretation \I, with the goal of showing that \I is a model
of~\Onto, as follows
(for all $x,y\in\Delta^{\J}$, $a\in\NI$, $A\in\NC$, $r\in\NR$, and
$d\in\chain$):
\begin{alignat*}{3}
\Delta^{\I} & :=\Delta^{\J} \\
a^\I&:=a^\J \\
A^{\I}\left(x\right) & :=\max\big\{ {d}\mid x\in A_{\geqslant 
{d}}^{\J}\big\} \\
r^{\I}\left(x,y\right) & :=\max\big\{ {d}\mid(x,y)\in 
r_{\geqslant {d}}^{\J}\big\}
\end{alignat*}
In order to show that $\I$ is a model of $\Onto$, we first prove the following
proposition:

%\begin{subproof}
\begin{proposition}\label{prop 1:I->J}
	Let $C$ be a concept, $r$ a role, $x,y\in \Delta^\I$, and
	$d\in\chain_>0$. Then we have
	\begin{alignat*}{3}
		{C}^{\I}\left(x\right)\geqslant {d} &\text{ iff }
		  x\in \rho\left(C,\geqslant {d}\right)^\J \text{ and}
		\\
		{r}^{\I}\left(x,y\right)\geqslant{d} &\text{ iff }
		  (x,y)\in\rho\left(r,\geqslant{d}\right)^{\J}.
	\end{alignat*}
\end{proposition}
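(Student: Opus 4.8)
The plan is to prove both equivalences simultaneously by structural induction on the concept~$C$, treating the role claim as part of the base, since complex roles in \SROIN are only $r$, $r^-$, and~$u$. The concept cases for $\exists$, $\forall$, number restrictions, and $\exists r.\Self$ will invoke the already-established role correspondence, so I would settle the roles and atomic concepts first. For a concept name~$A$, the definition $A^\I(x)=\max\{d'\mid x\in A_{\geqslant d'}^\J\}$ gives that $A^\I(x)\geqslant d$ iff $x\in A_{\geqslant d'}^\J$ for some $d'\geqslant d$; the cut-monotonicity axioms $A_{>d}\sqsubseteq A_{\geqslant d}$ present in~$\Onto\crisp$ collapse this to $x\in A_{\geqslant d}^\J=\rho(A,\geqslant d)^\J$. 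The same argument with $r_{>d}\sqsubseteq r_{\geqslant d}$ handles role names, and $r^-$, $u$, $\top$, $\bot$ follow immediately from the definitions (using $0<d\leqslant 1$ for $\top$ and $\bot$).

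For the inductive step I would split on the outermost constructor. The cases $\sqcap$, $\sqcup$, and $\exists r.C$ all rest on one observation: since \chain is finite, $C^\I(x)\tnorm D^\I(x)\geqslant d$ holds iff there is a (minimal) pair $(d_1,d_2)\in\tnorm^-(d)$ with $C^\I(x)\geqslant d_1$ and $D^\I(x)\geqslant d_2$, using monotonicity of \tnorm for one direction and minimality of $\tnorm^-(d)$ for the other; the analogous statement holds for $\tconorm^-(d)$. Applying the induction hypothesis to each disjunct then matches the $\rho$-image term by term, and for $\exists r.C$ the supremum over successors is a maximum witnessed by a single~$y$. The fuzzy nominal is direct from $a^\I=a^\J$, while $\uatLeast{m}{r}$ and $\exists r.\Self$ reduce to counting, respectively checking, $r_{\geqslant d}$-edges via the role base case.

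The delicate cases are negation, $\forall r.C$, and $\uatMost{m}{r}$, all governed by the antitone operators \negation and \implication. Here I would first record two threshold characterizations: $\negation a\geqslant d$ iff $a\leqslant\negation^-(d)$ iff $\lnot(a>\negation^-(d))$, and $a\implication b<d$ iff there is $(d_1,d_2)\in\implication^-(d)$ with $a\geqslant d_1$ and $b\leqslant d_2$; both follow from monotonicity together with the minimality/maximality built into $\negation^-$ and $\implication^-$. For $\forall r.C$ I would argue contrapositively: $(\forall r.C)^\I(x)\geqslant d$ fails iff some $y$ makes $r^\I(x,y)\implication C^\I(y)<d$, which by the characterization and the induction hypothesis is exactly a witness against membership in $\bigsqcap_{(d_1,d_2)\in\implication^-(d)}\forall\rho(r,\geqslant d_1).\rho(C,>d_2)$. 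The strict bound $C^\I(y)>d_2$ corresponds to $y\in\rho(C,>d_2)^\J=\rho(C,\geqslant\next{d_2})^\J$, so I also need the induction hypothesis at $\next{d_2}$, which is legitimate because $d_2<1$. Negation and $\uatMost{m}{r}$ are treated in the same spirit, the latter by noting that $\min_i r^\I(x,y_i)\leqslant\negation^-(d)$ for every $(m+1)$-tuple of distinct successors says precisely that fewer than $m+1$ of them lie in $r_{>\negation^-(d)}$.

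I expect the main obstacle to be the bookkeeping of strict versus non-strict thresholds in the $\forall$, $\lnot$, and $\uatMost{}{}$ cases: in particular, checking that $\rho(C,>d_2)$ is well defined, which is guaranteed by the stated fact that $d_2<1$ for all $(d_1,d_2)\in\implication^-(d)$, and verifying that the minimal/maximal frontier $\implication^-(d)$ genuinely captures the set $\{(a,b)\mid a\implication b<d\}$. Everything else is a mechanical unfolding of the semantics in Table~\ref{table: SROIQ concepts} against the reduction rules of Table~\ref{table: SROIQ reduction}, closed under the induction hypothesis.
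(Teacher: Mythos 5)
Your proposal is correct and follows essentially the same route as the paper's proof: structural induction on~$C$ with roles and atomic concepts settled first via the cut-monotonicity axioms, the monotone cases handled through the minimality of $\tnorm^-(d)$ and $\tconorm^-(d)$ together with finiteness of~\chain, and the antitone cases ($\lnot$, $\forall r.C$, $\uatMost{m}{r}$) argued contrapositively via the frontier characterizations of $\negation^-$ and $\implication^-$. Your explicit remark that the induction hypothesis must be invoked at $\next{d_2}$ (legitimate since $d_2<1$) is a detail the paper leaves implicit, but it changes nothing substantive.
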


\begin{proof}
	For role names~$r$, the claim holds by the construction of $r^\I$ and the fact
	that $r_{\geqslant\next{d}}\sqsubseteq r_{\geqslant d}$ is contained in
	$\Onto\crisp$ for every $d\in\chain_{<1}$.
	For the universal role~$u$, the equivalence trivially holds since we have
	that $(x,y)\in u^\J$ and $u^\I(x,y)=1$.	
	Finally, for inverse roles it is an immediate consequence of the facts
	that $(r^{-})^\I(x,y)=$ $r^\I(y,x)$, and
	$(x,y)\in (r_{\geqslant d}^-)^\J$ iff
	$(y,x)\in r_{\geqslant d}^\J$.

	The rest of the proposition is proved by induction on the structure of~$C$.
	For concept names~$A$, the equivalence holds by the definition of $A^\I$ and
	the fact that $A_{\geqslant\next{d}}\sqsubseteq A_{\geqslant  d}$ is in
	$\Onto\crisp$ for every $d\in\chain_{<1}$.
	For the induction step, we consider all possible concept constructors:

	%%%%%%%%%%%%%%%%%%%%%%%%%%%%%%%%%%%
	\noindent\textbf{Top Concept:} 
	The proof for this case is an immediate consequence of the facts that
	$\top^\I(x)=1$, $\rho(\top,\geqslant d)=\top$, and
	$\top^\J=\Delta^\J=\Delta^\I$.

%%%%%%%%%%%%%%%%%%%%%%%%%%%%%%%%%%%%%%%%%%%%%%%%%%%%%%%%%%%%%%%%%%%%%%%%%%%%%%%%%%%%%%%%%%%%%%%%%%%%%%%%%%%%%%%%%%%%%%%%%%%%%%%%%
	\noindent\textbf{Bottom Concept:} 
	For bottom, we  have $\bot^\I(x)=0$,
	$\rho(\bot,\geqslant d)=\bot$, and $\bot^\J=\emptyset$.

%%%%%%%%%%%%%%%%%%%%%%%%%%%%%%%%%%%%%%%%%%%%%%%%%%%%%%%%%%%%%%%%%%%%%%%%%%%%%%%%%%%%%%%%%%%%%%%%%%%%%%%%%%%%%%%%%%%%%%%%%%%%%%%%%%	
	\noindent\textbf{Concept Negation:} 
	We have that $x \in \rho(\lnot C,\geqslant d)^\J$ holds if and only if
	$x \notin \rho(C,>{\negation^-}(d))^\J$. By the induction hypothesis,
	this is equivalent to
	$C^\I(x)\leqslant\max\{d'\in\chain \mid \negation d'\geqslant d\}$.
	Since $\negation$ is antitone, this is finally
	equivalent to $(\lnot C)^\I(x)=\negation C^\I(x)\geqslant d$.

%%%%%%%%%%%%%%%%%%%%%%%%%%%%%%%%%%%%%%%%%%%%%%%%%%%%%%%%%%%%%%%%%%%%%%%%%%%%%%%%%%%%%%%%%%%%%%%%%%%%%%%%%%%%%%%%%%%%%%%%%%%%%%%%%%

	\noindent\textbf{Concept Conjunction:} 
	If $C_1^\I(x)\tnorm C_2^\I(x)\geqslant {d}$, then by the definition
	of ${\tnorm^-}(d)$ there is at least one pair $(d_1,d_2)\in{\tnorm^-}(d)$
	such that $C_1^\I(x)\geqslant d_1$ and $C_2^\I(x)\geqslant d_2$.
	Since $d>0$, we also know that $d_1>0$ and $d_2>0$.
	Thus, by the induction hypothesis we have
	\[
		x\in
		  \big(\rho(C_1,\geqslant {d_1})\sqcap \rho(C_2,\geqslant {d_2})\big)^\J
		  \subseteq \rho(C_1\sqcap C_2,\geqslant d)^\J.
	\]
	
	Conversely, suppose that  
	$x\in(\rho(C_1,\geqslant d_1)\sqcap\rho(C_2,\geqslant d_2))^\J$
	holds for some $(d_1,d_2)\in{\tnorm^-}(d)$.
	Then by induction hypothesis, $C_1^\I(x)\geqslant {d_1}$ and
	$C_2^\I(x)\geqslant {d_2}$.
	Because of the monotonicity of \tnorm we then get that
	$C_1^\I(x)\tnorm C_2^\I(x)\geqslant d_1\tnorm d_2\geqslant d$, as we
	wanted to show.

	The proof for disjunction is similar to the proof for conjunction.
%%%%%%%%%%%%%%%%%%%%%%%%%%%%%%%%%%%%%%%%%%%%%%%%%%%%%%%%%%%%%%%%%%%%%%%%%%%%%%%%%%%%%%%%%%%%%%%%%%%%%%%%%%%%%%%%%%%%%%%%%%%%%%%%%%

	\noindent\textbf{Existential Restriction:} Suppose that
	  $(\exists r. C)^\I(x)\geqslant {d}$.
	Since \chain is finite, there must exist some 
	$y\in\Delta^\I$ with 
	$r^\I(x,y)\tnorm C^\I(y)\geqslant {d}$.
	We have $r^\I(x,y)\geqslant {d_1}$
	and $C^\I(y)\geqslant {d_2}$ for some $({d_1},{d_2})\in{\tnorm^-}(d)$.
	By induction, we get
	$(x,y)\in\rho(r,\geqslant {d_1})^\J$ and
	$y\in\rho(C,\geqslant {d_2})^\J$.
	Therefore,
	\[
		x
		  \in \big(\exists\rho(r,\geqslant {d_1}).\rho(C,\geqslant {d_2}) \big)^\J
		  \subseteq \rho(\exists r .C, \geqslant {d})^\J.
	\]

	Conversely, suppose that 
	$x\in (\exists \rho(r,\geqslant {d_1}).\rho(C,\geqslant {d_2}))^\J$ for
	some pair $({d_1},{d_2})\in{\tnorm^-}(d)$.
	Thus, there exists $y$ with $(x,y)\in 
	\rho(r,\geqslant {d_1})^\J$ and $y \in \rho(C,\geqslant {d_2})^\J$.
	By induction, we have that
	$r^\I(x,y)\geqslant d_1$ and $C^\I(y)\geqslant d_2$, and 
	therefore
	\[ (\exists r.C)^\I(x)
	  \geqslant r^\I(x,y)\tnorm C^\I(y)
	  \geqslant d_1\tnorm d_2
	  \geqslant d. \]
	
%%%%%%%%%%%%%%%%%%%%%%%%%%%%%%%%%%%%%%%%%%%%%%%%%%%%%%%%%%%%%%%%%%%%%%%%%%%%%%%%%%%%%%%%%%%%%%%%%%%%%%%%%%%%%%%%%%%%%%%%%%%%%%%%%%

	\noindent\textbf{Universal Restriction:}
	If $(\forall r.C)^\I(x)\geqslant d$, then for all
	$y\in\Delta^\I$ we have
	$\big(r^\I(x,y)\Rightarrow C^\I(y)\big)\geqslant {d}$.
	Consider any $y\in\Delta^\I$ and $(d_1,d_2)\in{\implication^-}(d)$
	with $r^\I(x,y)\geqslant d_1$. If $C^\I(y)\leqslant d_2$, then
	we immediately get that 
	$$r^\Imc(x,y)\implication C^\I(y)  \leqslant d_1\implication d_2 < d,$$
	which contradicts our assumption.
	Thus, we must have $C^\I(y)>d_2$, and hence
	$x\in\big(\forall\rho(r,\geqslant d_1).\rho(C,>d_2)\big)^\J$ by the
	induction hypothesis (as $d_1>0$).
	As this argument applies to all pairs $(d_1,d_2)\in{\implication^-}(d)$,
	we obtain $x\in\rho(\forall r.C,\geqslant d)^\J$, as required.
	
	For the opposite direction, assume that $(\forall r.C)^\I(x)<d$. Then
	there must be a $y\in\Delta^\I$ such that
	$r^\I(x,y)\implication C^\I(y)<d$. By the definition of
	${\implication^-}(d)$, we can find a pair $(d_1,d_2)\in{\implication^-}(d)$
	with $r^\I(x,y)\geqslant d_1$ and $C^\I(y)\leqslant d_2$. Since
	$d_1>0$, the induction hypothesis yields that 
	$(x,y)\in\rho(r,\geqslant d_1)^\J$ and
	$y\notin\rho(C,>d_2)^\J$.
	But then, this implies that $x\notin\rho(\forall r.C,\geqslant d)^\J$.
	
%%%%%%%%%%%%%%%%%%%%%%%%%%%%%%%%%%%%%%%%%%%%%%%%%%%%%%%%%%%%%%%%%%%%%%%%%%%%%%%%%%%%%%%%%%%%%%%%%%%%%%%%%%%%%%%%%%%%%%%%%%%%%%%%%%
	\noindent\textbf{Nominals:}
	Consider the case where $C=\{d_1/o_1,\ldots,d_m/o_m\}$ such that
	$o_1,\dots,o_m\in\NI$ and $d_1,\ldots d_m\in\chainp$.
	Then $C^\I(x)\geqslant {d}$ iff $x=o_i^\I$ for some 
	$i\in\{1,\dots,m\}$ with $d_i\geqslant {d}$, which in turn is equivalent to
	\[ x
	  \in \{o_i^\I \mid d_i\geqslant d,\ i\in\{1,\dots,m\}\}
	  = \rho(C,\geqslant {d})^\J.
	\]
	
		%
%	The opposite direction can be proved by contradiction.
%%%%%%%%%%%%%%%%%%%%%%%%%%%%%%%%%%%%%%%%%%%%%%%%%%%%%%%%%%%%%%%%%%%%%%%%%%%%%%%%%%%%%%%%%%%%%%%%%%%%%%%%%%%%%%%%%%%%%%%%%%%%%%%%%%

  \noindent\textbf{Unqualified Number Restrictions:}
  The fact that $(\uatLeast{m}{r})^\I(x)\geqslant d$ is equivalent to the
  existence of $m$ different elements $y_1,\dots,y_m\in\Delta^\I$ such that
  $r^\I(x,y_i)\geqslant d$ holds for all $i\in\{1,\dots,m\}$. This is in turn
  equivalent to the existence of such $y_i$ with
  $(x,y_i)\in\rho(r,\geqslant d)^\J$ for all~$i$, and hence to
  $x\in(\uatLeast{m}{\rho(r,\geqslant d)})^\J$.

  The proof for unqualified at-most restrictions can be obtained by a
  combination of previous arguments for $\uatLeast{m}{r}$ and $\lnot C$.

	\noindent\textbf{Local Reflexivity:} 
	We have $(\exists r.\text{Self})^\I(x)\geqslant {d}$ iff %it holds that
	$r^\I(x,x)\geqslant {d}$, which is equivalent to
	$(x,x)\in \rho(r,\geqslant {d})^\J$, and to %our goal
	$x\in(\exists \rho(r,{\geqslant {d}}).\text{Self})^\J$.
	\qed
\end{proof}
%\end{subproof}
%
In order to finish the proof of the first direction of
Theorem~\ref{thm:crispification-correct}, it remains to show the following
lemma.

\begin{lemma}\label{lem: soundness}
  If \J is a classical model of $\Onto\crisp$, then \I is a fuzzy model
  of~\Onto.
\end{lemma}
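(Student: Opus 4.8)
The plan is to verify that \I satisfies every axiom of~\Onto, proceeding by a case distinction on the type of axiom. Since every axiom of~\Onto is mapped by~$\kappa$ to one or more classical axioms of~$\Onto\crisp$ (Table~\ref{table: SROIQ reduction}), and $\J\models\Onto\crisp$ by hypothesis, the whole argument reduces to translating satisfaction in~$\J$ back into satisfaction in~$\I$. The single tool for this translation is Proposition~\ref{prop 1:I->J}, which lets me replace every statement $x\in\rho(C,\geqslant d)^\J$ by $C^\I(x)\geqslant d$ (and likewise for roles) and, read contrapositively together with the shorthand that $>d$ abbreviates $\geqslant\next{d}$, every statement $x\notin\rho(C,>d)^\J$ by $C^\I(x)\leqslant d$. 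Note that nominals and number restrictions need no separate treatment at the axiom level, since they occur only inside complex concepts, which Proposition~\ref{prop 1:I->J} already covers uniformly.

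For the ABox, a concept assertion $\axiom{C(a)\geqslant d}$ maps to $\rho(C,\geqslant d)(a)$, so $a^\J\in\rho(C,\geqslant d)^\J$ gives $C^\I(a^\I)\geqslant d$ directly; a $\leqslant$-assertion $\axiom{C(a)\leqslant d}$ maps to $\lnot\rho(C,>d)(a)$, and the contrapositive reading of the proposition yields $C^\I(a^\I)\leqslant d$. Role assertions are handled identically, and the (in)equality assertions carry over verbatim since $a^\I=a^\J$. For a GCI $\axiom{C\sqsubseteq D\geqslant d}$ I would argue by contradiction exactly as in the universal-restriction case of the preceding proof: if $C^\I(x)\implication D^\I(x)<d$ for some~$x$, the definition of $\implication^-(d)$ supplies a pair $(d_1,d_2)\in\implication^-(d)$ with $C^\I(x)\geqslant d_1$ and $D^\I(x)\leqslant d_2$, whence $x\in\rho(C,\geqslant d_1)^\J$ and $x\notin\rho(D,>d_2)^\J$, contradicting the corresponding inclusion in~$\Onto\crisp$.

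The RBox splits into two groups. The axioms $\text{dis}$, $\text{ref}$, $\text{irr}$, and $\text{asy}$ translate to the same kind of axiom over the appropriate cut role, so Proposition~\ref{prop 1:I->J} converts them back immediately; and transitivity, symmetry, and simple role inclusions are defined through $\kappa$ in terms of complex role inclusions, so they follow once that case is settled.

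The main obstacle is the complex role inclusion $\axiom{r_1 r_2\sqsubseteq r\geqslant d}$, whose semantics involves the supremum $\sup_y r_1^\I(x,y)\tnorm r_2^\I(y,z)$. Here I would first observe that, because $\implication$ is antitone in its first argument and \chain is finite, $(\sup_y r_1^\I(x,y)\tnorm r_2^\I(y,z))\implication r^\I(x,z)=\min_y\big((r_1^\I(x,y)\tnorm r_2^\I(y,z))\implication r^\I(x,z)\big)$, so the axiom is equivalent to demanding $(r_1^\I(x,y)\tnorm r_2^\I(y,z))\implication r^\I(x,z)\geqslant d$ for all $x,y,z$. The key algebraic ingredient is then the exchange law of residua, $(u\tnorm v)\implication w=u\implication(v\implication w)$, which lets me peel off the residuum twice: first $\implication^-(d)$ separates $r_1$ (yielding the pair $(d_1,d')$ and the condition $v\implication w\leqslant d'$, i.e.\ $v\implication w<\next{d'}$), and then $\implication^-(\next{d'})$ separates $r_2$ (yielding $(d_2,d_3)$), so that a failure of the fuzzy axiom becomes the simultaneous $r_1^\I(x,y)\geqslant d_1$, $r_2^\I(y,z)\geqslant d_2$, and $r^\I(x,z)\leqslant d_3$. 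A contradiction argument as for GCIs then matches this exactly against the classical inclusions $\rho(r_1,\geqslant d_1)\rho(r_2,\geqslant d_2)\sqsubseteq\rho(r,>d_3)$ in~$\Onto\crisp$. Verifying this double decomposition --- in particular pinning down the $\next{d'}$ shift and checking that the generated pairs are precisely those ranged over in Table~\ref{table: SROIQ reduction} --- is the one genuinely delicate computation; everything else is a routine application of Proposition~\ref{prop 1:I->J}.
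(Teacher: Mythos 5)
Your proposal is correct and follows essentially the same route as the paper's proof: a case analysis over axiom types, using Proposition~\ref{prop 1:I->J} (with the $>d$ / $\geqslant\next{d}$ identification) to translate classical satisfaction in~$\J$ back to fuzzy satisfaction in~$\I$, contradiction arguments via $\implication^-(d)$ for GCIs, and the double residuum decomposition through $\implication^-(d)$ and $\implication^-(\next{d'})$ for complex role inclusions. Your explicit justification that the supremum in the role-inclusion semantics can be pulled outside the residuum (by antitonicity and finiteness of~\chain) is a detail the paper leaves implicit, but the argument is otherwise identical.
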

\begin{proof}
  We need to show that \I satisfies all axioms in~$\Onto$:

	\noindent\textbf{Concept Assertions:}
	Suppose that $\Onto$ contains the concept assertion $C(a)\geqslant d$, and
	thus $\Onto\crisp$ contains the $\rho(C,\geqslant {d})(a)$.
	Since $\J$ is a model of $\Onto\crisp$, we have
	$a^\J\in\rho(C,\geqslant d)^\J$, and by Proposition~\ref{prop 1:I->J} that
	$C^\I(a^\I)\geqslant d$.
	Similarly, for an assertion $C(a)\leqslant d$ in~\Onto, we have
	$a^\J\notin\rho(C,>d)^\J$, and thus $C^\I(a^\I)\leqslant d$.

	\noindent\textbf{Other ABox Axioms:}
	The proof for role assertions can be obtained by adapting the proof for
	concept assertions.
	Axioms of the form $a\neq b$, $a=b$ are trivially satisfied since
	$\Delta^\I=\Delta^\J$ and for every individual $a\in\NI$ we have $a^\I=a^\J$.

	\noindent\textbf{Concept Inclusions:}
	Suppose that our ontology contains the concept inclusion
	$\axiom{C\sqsubseteq D \geqslant d}$ and assume that there is a
	$x\in\Delta^\I$ such that $C^\I(x)\implication D^\I(x)<d$. Thus,
	there exists a pair $(d_1,d_2)\in{\implication^-}(d)$ such that
	$C^\Imc(x)\geqslant d_1$ and $D^\Imc(x)\leqslant d_2$. Since $d_1>0$,
	Proposition~\ref{prop 1:I->J} yields $x\in\rho(C,\geqslant d_1)^\J$ and
	$x\notin\rho(D,>d_2)^\J$, which contradicts the fact that \J satisfies
	$\rho(C,\geqslant d_1)\sqsubseteq\rho(D,>d_2)$.

	\noindent\textbf{Role Inclusions:}
	Suppose that our ontology contains the role inclusion
	$\axiom{r_1r_2\sqsubseteq r\geqslant d}$ and it holds that
	\[ \big(r_1^\I(x,y)\otimes r_2^\I(y,z)\big)
	  \implication r^\I(x,z) < d, \]
	or equivalently,
	\[ r_1^\I(x,y)\implication
	  \big(r_2^\I(y,z)\implication r^\I(x,z)\big) < d, \]
	for some $x,y,z\in\Delta^\I$.
	Then there exist $(d_1,d')\in{\implication^-}(d)$ such that 
	$r_1^\I(x,y)\geqslant d_1$ and
	$r_2^\I(y,z)\implication r^\I(x,z)\leqslant d'$.
	The latter implies the existence of
	$(d_2,d_3)\in{\implication^-}(\next{d'})$ with
	$r_2^\I(y,z)\geqslant d_2$ and $r^\I(x,z)\leqslant d_3$.
	Proposition~\ref{prop 1:I->J} yields that
	$(x,y)\in\rho(r_1,\geqslant d_1)^\J$,
	$(y,z)\in\rho(r_2,\geqslant d_2)^\J$, and
	$(x,z)\notin\rho(r_3,>d_3)^\J$, which contradicts the fact that \J
	satisfies
	$\rho(r_1,\geqslant d_1)\rho(r_2,\geqslant d_2)\sqsubseteq\rho(r,>d_3)$.

	\noindent\textbf{Disjoint Role Axioms:}
	Suppose that our ontology contains the axiom $\text{dis}(r_1,r_2)$.
	We show that for all $x,y\in\Delta^\I$, either
	$r_1^\I(x,y)=0$ or $r_2^\I(x,y)=0$.
	Since \J satisfies $\Onto\crisp$, %we know that
	$\rho(r_1,>0)^\J\cap\rho(r_2,>0)^\J=\emptyset$.
	By Proposition~\ref{prop 1:I->J}, there can be no pair
	$x,y\in\Delta^\I$ such that $r_1^\I(x,y)>0$ and
	$r_2^\I(x,y)>0$, as we wanted to show.

	The proofs for the other role axioms are similar.\qed
\end{proof}

\subsection{Completeness}
\label{app:completeness}

Conversely, we consider a fuzzy model \I of \Onto, and define the classical
interpretation \J as follows
(for all $x,y\in\Delta^{\I}$, $a\in\NI$, $A\in\NC$, $r\in\NR$, and
$d\in\chainp$):
\begin{alignat*}{3}
\Delta^{\J} & := \Delta^{\I} \\
a^\J\nonumber&:= a^\I\\
A_{\geqslant {d}}^{\J}&:=\{x\mid A^{\I}\left(x\right) \geqslant d \}	
\\
r_{\geqslant {d}}^{\J} &:= \{(x,y)\mid r^{\I}\left(x,y\right) 
\geqslant d\}
\end{alignat*}
%s
We again prove a connection similar to the one of Proposition~\ref{prop 1:I->J}.

\begin{proposition}\label{prop: O -> Ocrisp}
  Let $C$ be a concept, $r$ a role, $x,y\in \Delta^\I$, and
  $d\in\chain_>0$. Then we have
  \begin{alignat*}{3}
  	x\in \rho\left(C,\geqslant {d}\right)^\J &\text{ iff }
  	{C}^{\I}\left(x\right)\geqslant {d} 
    \ \text{ and}
  	\\
  	(x,y)\in\rho\left(r,\geqslant{d}\right)^{\J}&\text{ iff }
   {r}^{\I}\left(x,y\right)\geqslant{d}.
  \end{alignat*}
\end{proposition}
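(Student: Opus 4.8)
The plan is to prove this by the very same structural induction used for Proposition~\ref{prop 1:I->J}, since the biconditional to be established is literally identical; only the way the pair $(\I,\J)$ is built has changed. Indeed, the classical interpretation $\J$ defined here is exactly the one that the soundness construction of Appendix~\ref{app:soundness} would invert: for every concept name $A$ and every $x\in\Delta^\I$ we have $\max\{d\mid x\in A_{\geqslant d}^\J\}=A^\I(x)$, because $A^\I(x)\in\chain$ and $x\in A_{\geqslant d}^\J$ holds precisely when $A^\I(x)\geqslant d$; the analogous identity holds for role names. Moreover, the cut-nesting $A_{\geqslant\next{d}}^\J\subseteq A_{\geqslant d}^\J$ and $r_{\geqslant\next{d}}^\J\subseteq r_{\geqslant d}^\J$ holds here by construction, since demanding a larger degree is a stronger membership requirement. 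These are precisely the facts about $\J$ that drive the earlier induction, so the whole argument transfers.

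First I would dispatch the role and base cases. For a role name $r$ the equivalence $(x,y)\in r_{\geqslant d}^\J$ iff $r^\I(x,y)\geqslant d$ is immediate from the definition of $r_{\geqslant d}^\J$, and the universal-role and inverse-role cases are handled exactly as in Proposition~\ref{prop 1:I->J}. The base case for a concept name $A$ is likewise just the defining equation of $A_{\geqslant d}^\J$; note that here it is even more direct than in the soundness proof, where one first had to recover it from the maximum and the cut-preservation axioms in $\Onto\crisp$.

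For the induction step I would go through the constructors one by one, reusing the reduction rules $\rho$ from Table~\ref{table: SROIQ reduction} together with the minimality/maximality characterizations of ${\tnorm^-}(d)$, ${\tconorm^-}(d)$, ${\implication^-}(d)$, and ${\negation^-}(d)$. The cases for conjunction, disjunction, existential restriction, nominals, unqualified number restrictions, and local reflexivity go through verbatim, as each relies only on the induction hypothesis and the defining property of the relevant set of degree pairs. The main obstacle, modest as it is, lies in the negation and universal-restriction cases, where the residual operators are antitone and the rules switch to strict cuts $>d_2$; there I would use, as before, that a pair $(d_1,d_2)\in{\implication^-}(d)$ satisfies $d_1\implication d_2<d$ with $d_1$ minimal and $d_2$ maximal, and combine this with the induction hypothesis and the antitonicity of $\implication$ to obtain the equivalence. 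Since no genuinely new reasoning arises, the proposition follows.
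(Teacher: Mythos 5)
Your proposal is correct and follows essentially the same route as the paper, which likewise observes that the proof is the induction of Proposition~\ref{prop 1:I->J} repeated verbatim, with only the base cases changing --- and that these are now immediate from the definition of~$\J$, since $A_{\geqslant d}^\J$ and $r_{\geqslant d}^\J$ are defined directly as the $\alpha$-cuts of $A^\I$ and $r^\I$. Your additional remarks (that this $\J$ inverts the soundness construction and that cut-nesting holds by construction rather than via the axioms $A_{>d}\sqsubseteq A_{\geqslant d}$) are accurate but not needed beyond the base case.
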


\begin{proof}
  The proof is nearly the same as for Proposition~\ref{prop 1:I->J}, the only
  difference being the induction base cases.
	But it is easy to show the claim for concept and role names, given the
	definition of~\J.\qed
\end{proof}

\begin{lemma}\label{lem: completeness}
  If \I is a fuzzy model of \Onto, then \Jmc is a classical model
  of~$\Onto\crisp$.
\end{lemma}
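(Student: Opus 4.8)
The plan is to mirror the structure of the soundness proof (Lemma~\ref{lem: soundness}), exploiting the fact that Proposition~\ref{prop: O -> Ocrisp} already provides the exact correspondence $x\in\rho(C,\geqslant d)^\J$ iff $C^\I(x)\geqslant d$ (and analogously for roles) under the present construction of~\J. Since $\Onto\crisp$ consists of the cut axioms $A_{>d}\sqsubseteq A_{\geqslant d}$ and $r_{>d}\sqsubseteq r_{\geqslant d}$ together with the images under the mapping~$\kappa$ of the axioms of~\Onto (Table~\ref{table: SROIQ reduction}), I would verify satisfaction of each of these in turn, using Proposition~\ref{prop: O -> Ocrisp} as the bridge between membership in the crisp cut predicates of~\J and truth degrees in~\I.

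First I would dispatch the routine cases. The cut axioms hold immediately by the definition of~\J, since $A^\I(x)\geqslant\next{d}$ implies $A^\I(x)\geqslant d$, so $A_{\geqslant\next{d}}^\J\subseteq A_{\geqslant d}^\J$ (and likewise for roles). Individual (in)equality axioms are trivial because $\Delta^\J=\Delta^\I$ and $a^\J=a^\I$. For a concept assertion $C(a)\geqslant d$ in~\Onto, satisfaction in~\I gives $C^\I(a^\I)\geqslant d$, whence $a^\J\in\rho(C,\geqslant d)^\J$ by Proposition~\ref{prop: O -> Ocrisp}; the dual assertion $C(a)\leqslant d$ gives $C^\I(a^\I)\leqslant d$, so $a^\J\notin\rho(C,>d)^\J$, matching $\kappa(C(a)\leqslant d)=\lnot\rho(C,>d)(a)$. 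Role assertions are identical. The role axioms $\mathrm{dis}$, $\mathrm{ref}$, $\mathrm{irr}$, and $\mathrm{asy}$ translate into their crisp analogues over $\rho(r,>0)$ or $\rho(r,\geqslant 1)$, and each follows by applying Proposition~\ref{prop: O -> Ocrisp} pointwise to the corresponding fuzzy condition (e.g.\ $r^\I(x,x)=1$ for reflexivity yields $(x,x)\in\rho(r,\geqslant 1)^\J$).

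The crux lies in the GCIs and complex role inclusions, where the reduction branches over the sets $\implication^-(\cdot)$. For a GCI $\axiom{C\sqsubseteq D\geqslant d}$ I would argue by contradiction: if some $x\in\rho(C,\geqslant d_1)^\J$ but $x\notin\rho(D,>d_2)^\J$ for a pair $(d_1,d_2)\in\implication^-(d)$, then Proposition~\ref{prop: O -> Ocrisp} gives $C^\I(x)\geqslant d_1$ and $D^\I(x)\leqslant d_2$, so by antitonicity of~\implication in its first and monotonicity in its second argument $C^\I(x)\implication D^\I(x)\leqslant d_1\implication d_2<d$, contradicting $\I\models\axiom{C\sqsubseteq D\geqslant d}$. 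For a complex role inclusion $\axiom{r_1r_2\sqsubseteq r\geqslant d}$ I would use the residuation identity to rewrite the relevant value as $r_1^\I(x,y)\implication\big(r_2^\I(y,z)\implication r^\I(x,z)\big)$, which is at least~$d$ because $r_1^\I(x,y)\tnorm r_2^\I(y,z)$ is bounded by the supremum appearing in the semantics; then, assuming $(x,y)\in\rho(r_1,\geqslant d_1)^\J$, $(y,z)\in\rho(r_2,\geqslant d_2)^\J$ and $(x,z)\notin\rho(r,>d_3)^\J$ for the nested pairs $(d_1,d')\in\implication^-(d)$ and $(d_2,d_3)\in\implication^-(\next{d'})$ yields the same monotonicity contradiction, now in two layers. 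Transitivity and symmetry require no new argument, reducing to this case via $\kappa(\mathrm{trans}(r))=\kappa(rr\sqsubseteq r)$ and $\kappa(\mathrm{sym}(r))=\kappa(r\sqsubseteq r^-)$.

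I expect the main obstacle to be the complex role-inclusion case: correctly tracking the two-level residuation through the paired quantifications $\implication^-(d)$ and $\implication^-(\next{d'})$, and confirming that the finiteness of~\chain turns the supremum over intermediate elements into an attained maximum, so that the pointwise inequality for a single witness~$y$ suffices to derive the contradiction. All remaining cases are either immediate or direct mirrors of the soundness argument.
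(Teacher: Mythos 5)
Your proposal is correct and follows essentially the same route as the paper's proof: it uses Proposition~\ref{prop: O -> Ocrisp} as the bridge between cut-membership in~\J and truth degrees in~\I, dispatches assertions, equality axioms, and the cut axioms directly, and derives contradictions for GCIs and complex role inclusions via the (anti)monotonicity of~\implication and the two-level decomposition through ${\implication^-}(d)$ and ${\implication^-}(\next{d'})$. The only cosmetic difference is that the attained-maximum observation you flag for the role-inclusion case is not actually needed in this direction, since a single witness term is bounded by the supremum regardless of whether it is attained.
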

\begin{proof}
  We need to show that \J satisfies all axioms in~$\Onto\crisp$:

	\noindent\textbf{Concept Assertions:}
	Suppose that $\Onto\crisp$ contains the concept assertion
	$\rho(C,\geqslant {d})(a)$. By the construction of $\Onto\crisp$,
	$C(a)\geqslant d$ appears in $\Onto$.
	Since $\I$ is a model of $\Onto$, we have $C^\I(a)\geqslant d$, and by 
	Proposition~\ref{prop: O -> Ocrisp} we get $a^\J\in\rho(C,\geqslant d)^\J$, as
	we wanted to show.

	If $\Onto\crisp$ contains an assertion $\neg \rho(C,> {d})(a)$,
	then $C(a)\leqslant d$ appears in $\Onto$ and consequently
	$C^\I(a)\leqslant d$. By Proposition~\ref{prop: O -> Ocrisp}, we have that
	$a\not\in \rho(C,>d)^\J$, as we wanted to show.

	\noindent\textbf{Other ABox Axioms:}
	The proof for role assertions can be obtained by adapting the proof for concept assertions.
	Axioms of the form $a\neq b$, $a=b$ are trivially satisfied since $\Delta^\I=\Delta^\J$ and for every individual $a\in\NI$ we have $a^\I=a^\J$.

	\noindent\textbf{Concept Inclusions:}		
	Suppose that our ontology contains a concept inclusion $\rho(C,\geqslant d_1)\sqsubseteq \rho(D,>d_2)$ that is not satisfied.
	Thus  there exists some $x\in\Delta^\J$ such that $x\in
	\rho(C,\geqslant d_1)^\J$ and $x\not\in \rho(D,> d_2)^\J$ and, by
	Proposition~\ref{prop: O -> Ocrisp},  $C^\I(x)\geqslant d_1$ and
	$D^\I(x)\leqslant d_2$.
	By the construction of $\Onto\crisp$, we have
	$\axiom{C\sqsubseteq D \geqslant d}$ in \Onto and $(d_1\Rightarrow d_2)< d$
	for some $d\in\chainp$.
	By the properties of~\implication, we get
	$C^\I(x)\Rightarrow D^\I(x)\leqslant d_1\Rightarrow d_2<d$, which
	contradicts our assumption that \I is a model of~\Onto.

	All concept inclusions of the form $A_{\geqslant\next{d}}\sqsubseteq 
	A_{\geqslant  d}$ are trivially satisfied by the construction of $\J$.
	
	\noindent\textbf{Role Inclusions:}
	Assume that a role inclusion
	\[ \rho(r_1,\geqslant d_1)\rho(r_2,\geqslant d_2)\sqsubseteq\rho(r,>d_3)
	  \in \Onto\crisp \]
	is violated, i.e.\ there are three elements $x,y,z\in\Delta^\J$ such that we
	have
	$(x,y)\in \rho(r_1,\geqslant d_1)^\J$,
	$(y,z)\in\rho(r_2,\geqslant d_2)^\J$, and
	$(x,z)\not\in\rho(r,> d_3)^\J$.
	Proposition~\ref{prop: O -> Ocrisp} implies that:
	\begin{alignat}{3}
		r_1^\I(x,y)&\geqslant d_1\text{,\qquad}&
		r_2^\I(y,z)&\geqslant d_2\text{,\qquad}&
		r(x,z)&\leqslant d_3.
		\label{eq: 2 O -> Ocrisp}
	\end{alignat}
	By construction of $\Onto\crisp$, we have that
	$\axiom{r_1r_2\sqsubseteq r\geqslant d}\in\Onto$,
	$(d_1\implication d')<d$, and $(d_2\implication d_3)<\next{d'}$ for some
	$d,d'\in\chainp$.
	We obtain
	$$\big(d_1\implication(d_2\implication d_3)\big)
	  \leqslant (d_1\implication d') < d,$$ 
	  and hence %we get
	$\big((d_1\tnorm d_2)\implication d_3\big) < d$.

	Along with~\eqref{eq: 2 O -> Ocrisp} and the monotonicity and
	antitonicity properties of the operators $\tnorm$ and $\Rightarrow$, this
	implies that
	\[\left(\left(r_1^\I(x,y)\tnorm 
	r_2^\I(y,z)\right)\Rightarrow r^\I(x,z)\right)<d\]
	which is absurd since $\I$ is a model of $\Onto$ and $\axiom{r_1r_2\sqsubseteq r\geqslant d}\in\Onto$.
	
	All role inclusions of the form
	$r_{\geqslant\next{d}}\sqsubseteq r_{\geqslant  d}$ are trivially satisfied
	by the construction of $\J$.

	\noindent\textbf{Disjoint Role Axioms:}
	Suppose that $\Onto\crisp$ contains the disjoint role  axiom $\text{dis}(\rho(r_1,>0),\rho(r_2,>0))$.
	By construction of $\Onto\crisp$, we also have that
	$\text{dis}(r_1,r_2)\in\Onto$, and therefore either $r_1(x,y)=0$ or
	$r_2(x,y)=0$  for all $x,y\in\Delta^\I$.
	Proposition~\ref{prop: O -> Ocrisp} now implies that the axiom 
	$$\text{dis}(\rho(r_1,>0),\rho(r_2,>0))$$ is satisfied.

	The proofs for the other role axioms are similar.\qed
	\end{proof}

\section{Proof of Lemma~\ref{lem:crispification-complexity}}
\label{sec:ap-lem13}

To determine the size of $\Onto\crisp$ for a normalized \chain-\SROIN
ontology~\Onto, we start by analyzing the size of the sets ${\tnorm^-}(d)$,
${\tconorm^-}(d)$ and ${\implication^-}(d)$ (defined in the beginning of Appendix~\ref{app:reduction-correct}). It is clear that for every
$d_1\in\chain$ there can be at most one element $d_2\in\chain$ such that
$(d_1,d_2)$ is contained in any of these sets. This is due to the minimization
conditions in their definitions. Thus, the size of these sets is at most linear
in the size of~\chain.
Consequently, the size of any expression of the form $\rho(C,\geqslant d)$,
where $C$ is a complex concept that contains only one concept constructor, is
at most linear in the sizes of~$C$ and~\chain (cf.\ 
Table~\ref{table: SROIQ reduction}).

Since \Onto is normalized, ABox axioms contain no complex concepts, and hence
the size of $\kappa(\alpha)$ for any such axiom~$\alpha$ is the same as the size of~$\alpha$.

Consider now a GCI $\alpha:=\axiom{C\sqsubseteq D\geqslant d}$ and its reduced
form, containing a GCI $\rho(C,\geqslant d_1)\sqsubseteq\rho(D,>d_2)$ for each
pair $(d_1,d_2)\in{\implication^-}(d)$. Since~$\alpha$ contains at most one
concept constructor, the size of each reduced axiom is linear in the sizes
of~$\alpha$ and~\chain. Moreover, there are at most linearly many such axioms
(in the size of~\chain), bringing the total size of $\kappa(\alpha)$ to at most
linear in the size of~$\alpha$ and quadratic in the size of~\chain.

Likewise, for a role inclusion $\alpha:=\axiom{r_1r_2\sqsubseteq r\geqslant d}$
the number of pairs $(d_1,d')\in{\implication^-}(d)$ is linear in the size
of~\chain, and for each of these pairs we additionally have to consider
linearly many pairs of the form $(d_2,d_3)\in{\implication^-}(\next{d'})$. Thus, the same
bounds are valid for role inclusions.
The proof for the remaining role axioms is trivial.

In summary, the total size of~$\Onto\crisp$ is bounded linearly in the size
of~\Onto and quadratically in the size of~\chain.
\qed

\bibliographystyle{spmpsci}
\bibliography{citations}

\end{document}